\newcommand{\postbqp}{{\sf PostBQP}}
\newcommand{\postbpp}{{\sf PostBPP}}
\newcommand{\ph}{{\sf PH}}
\newcommand{\p}{{\sf P}}
\newtheorem{definition}{Definition}
\newtheorem{lemma}{Lemma}
\newtheorem{theorem}{Theorem}
\begin{document}

\title{Commuting Quantum Circuits with Few Outputs are\\
Unlikely to be Classically Simulatable}

\author{Yasuhiro Takahashi\\
{\small NTT Communication Science Laboratories,}\\
{\small NTT Corporation}\\
{\footnotesize\tt takahashi.yasuhiro@lab.ntt.co.jp}
\and
Seiichiro Tani\\
{\small NTT Communication Science Laboratories,}\\
{\small NTT Corporation}\\
{\footnotesize\tt tani.seiichiro@lab.ntt.co.jp}
\and
Takeshi Yamazaki\\
{\small Mathematical Institute, Tohoku University}\\
{\footnotesize\tt yamazaki@math.tohoku.ac.jp}
\and
Kazuyuki Tanaka\\
{\small Mathematical Institute, Tohoku University}\\
{\footnotesize\tt tanaka@math.tohoku.ac.jp}}




\date{}

\maketitle

\begin{abstract}
 We study the classical simulatability of commuting quantum circuits
 with $n$ input qubits and $O(\log n)$ output qubits, where a quantum
 circuit is classically simulatable if its output probability
 distribution can be sampled up to an exponentially small additive error
 in classical polynomial time. First, we show that there exists a
 commuting quantum circuit that is not classically simulatable unless
 the polynomial hierarchy collapses to the third level. This is the
 first formal evidence that a commuting quantum circuit is not
 classically simulatable even when the number of output qubits is
 exponentially small. Then, we consider a generalized version of the
 circuit and clarify the condition under which it is classically
 simulatable. Lastly, we apply the argument for the above evidence to
 Clifford circuits in a similar setting and provide evidence that such a
 circuit augmented by a depth-1 non-Clifford layer is not classically
 simulatable. These results reveal subtle differences between quantum
 and classical computation.
\end{abstract}

\section{Introduction and Summary of Results}

One of the most important challenges in quantum information processing
is to understand the difference between quantum and classical
computation. An approach to meeting this challenge is to study the
classical simulatability of quantum computation. Previous studies have
shown that restricted models of quantum computation, such as commuting
quantum circuits, are useful for this
purpose~\cite{Terhal,Fenner,Shepherd2,Shepherd,Aaronson1,Bremner,Ni,Jozsa,Takahashi-sim,Morimae}. Because of the simplicity of such restricted models,
they are also useful for identifying the source of the computational
power of quantum computers. It is therefore of great interest to study
their classical simulatability.

In this paper, we study the classical simulatability of commuting
quantum circuits with $n$ input qubits and $O({\rm poly}(n))$ ancillary
qubits initialized to $|0\rangle$, where a commuting quantum circuit is
a quantum circuit consisting of pairwise commuting gates, each of which
acts on a constant number of qubits. When all commuting gates in a
commuting quantum circuit act on at most $c$ qubits for some constant
$c\geq 2$, the circuit is said to be $c$-local. For considering the
classical simulatability, we adopt strong and weak simulations. The
strong simulation of a quantum circuit is to compute its output
probability up to an exponentially small additive error in classical
polynomial time and the weak one is to sample its output probability
distribution similarly. Any strongly simulatable quantum circuit is
weakly simulatable. Our main focus is on the hardness of classically
simulating quantum circuits and thus we mainly deal with the weak
simulatability, which yields a stronger result than that the strong
simulatability yields. Previous hardness results on the weak
simulatability are usually obtained with respect to multiplicative
error~\cite{Terhal,Bremner,Jozsa}, but such an error seems to be too
strong an assumption as discussed in~\cite{Aaronson1}. Our results are
obtained with respect to additive error.

In 2011, Bremner et al.\ showed that there exists a 2-local IQP circuit
with $O({\rm poly}(n))$ output qubits such that it is not weakly
simulatable (under a plausible assumption)~\cite{Bremner}, where an IQP
circuit is a quantum circuit consisting of pairwise commuting gates that
are diagonal in the $X$-basis $\{(|0\rangle\pm
|1\rangle)/\sqrt{2}\}$. Roughly speaking, this result means that when
the number of output qubits is large, even a simple commuting
quantum circuit is powerful. On the other hand, in 2013, Ni et al.\
showed that any 2-local commuting quantum circuit with $O(\log n)$
output qubits is strongly simulatable and that there exists a 3-local
commuting quantum circuit with only one output qubit such that it is not
strongly simulatable (under a plausible assumption)~\cite{Ni}. Thus,
when the number of output qubits is $O(\log n)$, the classical
simulatability of commuting quantum circuits depends on the number of
qubits affected by each commuting gate. A natural question is whether
there exists a commuting quantum circuit with $O(\log n)$ output qubits
such that it is not weakly simulatable.

There are two previous results related to this question. The first one
is that any (constant-local) IQP circuit with $O(\log n)$ output qubits
is weakly simulatable~\cite{Bremner}. Thus, if we want to answer the
above question affirmatively, we need to consider commuting quantum
circuits other than IQP circuits. The second one is that, if any
commuting quantum circuit with only one output qubit is weakly
simulatable, there exists a polynomial-time classical algorithm for the
problem of estimating the matrix element $|\langle 0| U |0\rangle|$
(up to a polynomially small additive error) for any unitary matrix $U$
that is implemented by a constant-depth quantum circuit~\cite{Ni}. This
suggests an affirmative answer to the above question since the matrix
element estimation problem seems to be hard for a classical
computer. However, the hardness has not been formally understood yet.

We provide the first formal evidence for answering the above question
affirmatively:
\begin{theorem}\label{commuting}
There exists a 5-local commuting quantum circuit with $O(\log n)$ output
 qubits such that it is not weakly simulatable unless the polynomial
 hierarchy $\ph$ collapses to the third level.
\end{theorem}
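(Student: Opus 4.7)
The plan is to exhibit a 5-local commuting circuit family $\{C_n\}$ whose marginal distribution on $O(\log n)$ output qubits encodes a ${\sf PP}$-hard quantity, and then to derive a $\ph$ collapse from any hypothetical weak simulator by combining Stockmeyer's approximate counting with Aaronson's theorem $\postbqp = {\sf PP}$.

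\emph{Step 1: Circuit construction.} Starting from a $\postbqp$ circuit family $\{Q_n\}$ for a ${\sf PP}$-complete language (which exists since $\postbqp = {\sf PP}$), I would first depth-compress each $Q_n$ to a constant-depth circuit with many single-qubit postselections using a Terhal--DiVincenzo-style teleportation gadget, and then replace each non-commuting entangling gate by a commuting gadget acting on at most five qubits together with fresh ancillas. The resulting circuit $C_n$ has $\mathrm{poly}(n)$ qubits but only $O(\log n)$ designated output qubits, and is engineered so that the marginal probability $D_n(y^{*})$ of one specific output pattern $y^{*}$ equals, up to a fixed, efficiently computable normalization of size $2^{-p(n)}$ for some polynomial $p$, the postselected acceptance probability of $Q_n$.

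\emph{Step 2: Hardness via Stockmeyer.} Suppose $\{C_n\}$ admits a polynomial-time classical sampler $A$ whose output distribution $D'_n$ satisfies $\|D'_n - D_n\|_1 \leq 2^{-N(n)}$ for $N(n) = \mathrm{poly}(n)$ chosen much larger than $p(n)$. For any output string $y$, $D'_n(y)$ is the number of accepting random tapes of $A$ divided by $2^{r(n)}$, hence a $\#{\sf P}$-function of the input. Stockmeyer's approximate counting then gives a $(1 \pm 1/\mathrm{poly}(n))$-multiplicative approximation of $D'_n(y)$ in ${\sf BPP}^{\sf NP}$. Because $\|D'_n - D_n\|_1 \leq 2^{-N(n)} \ll D_n(y^{*})/\mathrm{poly}(n)$, this also multiplicatively approximates $D_n(y^{*})$, and hence the target $\postbqp$ acceptance probability. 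Applying this separately to a joint ``accept-and-postselect'' pattern and a marginal ``postselect'' pattern and taking their ratio decides the underlying ${\sf PP}$-complete language in ${\sf BPP}^{\sf NP}$. We thus obtain ${\sf PP} \subseteq {\sf BPP}^{\sf NP} \subseteq \Sigma_3$, and combining with Toda's theorem $\ph \subseteq {\sf P}^{\sf PP}$ yields $\ph \subseteq \Sigma_3$, the advertised third-level collapse.

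\emph{Main obstacle.} The conceptual difficulty lies entirely in Step 1: the gadgetization must simultaneously preserve exact commutativity of every gate, respect 5-locality, compress the output window to $O(\log n)$ qubits, and keep the distinguished probability $D_n(y^{*})$ bounded below by $2^{-p(n)}$ so that it outlives the simulator's exponentially small additive error budget. Bremner et al.\ allowed polynomially many output qubits, and Ni et al.\ handled the single-output case only informally via matrix-element estimation; the new technical ingredient is a 5-local commuting gadget with an $O(\log n)$ output window whose target probability remains ${\sf PP}$-hard to multiplicatively approximate. I expect the bulk of the proof effort, and the only genuinely new step, to be the careful design and analysis of this gadget.
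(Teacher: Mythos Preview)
Your Step~1 contains the genuine gap, and you correctly flag it as the main obstacle without actually closing it. The phrase ``replace each non-commuting entangling gate by a commuting gadget'' is not a workable strategy: pairwise commutativity is a global property of the gate set, not something achievable by local per-gate replacement, so there is no generic gadget of this kind. The paper's construction is quite different and rests on two concrete ideas you are missing. First, to compress the $b+1$ postselection qubits of the depth-3 circuit $A_n$ down to $O(\log n)$ output qubits, it applies the OR reduction circuit, which maps the all-zero pattern on $b+1$ qubits to the all-zero pattern on $\lceil\log(b+2)\rceil$ qubits and every nonzero pattern to something orthogonal; this preserves the relevant acceptance probabilities exactly, with no normalization loss. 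Second, to obtain commutativity, one observes that the OR reduction circuit can itself be written as a product of pairwise-commuting two-qubit gates $g_j$ (controlled phase rotations sandwiched between $H$'s on the target). The commuting circuit is then $\prod_j A_n^\dagger g_j A_n$: since the $g_j$ commute, so do their conjugates by $A_n$, and the product telescopes back to $A_n^\dagger(\prod_j g_j)A_n$, reproducing the intended output distribution. The 5-locality is established by a case analysis showing that each $A_n^\dagger g_j A_n$ touches at most five qubits, which exploits the specific layer structure of the depth-3 teleportation circuit (each $g_j$ touches $A_n$ on exactly one qubit, and following that qubit back through three layers adds at most three more).

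Your Step~2 is correct but heavier than necessary. Under the paper's definition of weak simulation the additive error is exponentially small, so no Stockmeyer counting is needed: the classical sampler $R_n$ itself directly yields a $\postbpp$ algorithm for the language $L\in\postbqp\setminus\postbpp$ (declare postselection success when $R_n(x)\in\{0^m 0,\,0^m 1\}$ and accept on $0^m 1$), because the target probabilities are at least $2^{-\mathrm{poly}(n)}$ and the simulator's additive error can be taken smaller still. This gives $\postbqp\subseteq\postbpp$ and hence the third-level collapse directly. The Stockmeyer--Toda route you sketch would be the right tool only if the simulator's error were merely inverse-polynomial or constant.
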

\noindent
It is widely believed that $\ph$ does not collapse to any
level~\cite{Papadimitriou}. Thus, the circuit in Theorem~\ref{commuting} is the
desired evidence. To construct the circuit, we first show the existence
of a depth-3 quantum circuit $A_n$ that is not weakly simulatable with
respect to additive error (under a plausible assumption), where it has
$n$ input qubits, $O({\rm poly}(n))$ ancillary qubits, and $O({\rm
poly}(n))$ output qubits. This is shown by our new analysis of the weak
simulatability (with respect to additive error) of a depth-3 quantum
circuit that is not weakly simulatable with respect to multiplicative
error (under a plausible assumption)~\cite{Bremner,Fenner}. Our idea for
constructing the circuit in Theorem~\ref{commuting} is to combine $A_n$
with the OR reduction circuit~\cite{Hoyer}, which reduces the
computation of the OR function on $k$ bits to that on $O(\log k)$
bits. The resulting circuit has $O(\log n)$ output qubits and is not
weakly simulatable (under a plausible assumption). It is of course not a
commuting quantum circuit, but an important observation is that the OR
reduction circuit can be transformed into a 2-local commuting quantum
circuit. We consider a quantum circuit consisting gates of the form
$A_n^\dag g A_n$ for any commuting gate $g$ in the commuting OR
reduction circuit and analyze it rigorously, which implies
Theorem~\ref{commuting}.

Then, in order to generalize the above-mentioned result that any IQP
circuit with $O(\log n)$ output qubits is weakly
simulatable~\cite{Bremner}, we consider the weak simulatability of a
generalized version of the circuit in Theorem~\ref{commuting}. We assume
that we are given two quantum circuits: $F_n$ is a quantum circuit with
$n$ input qubits, $O({\rm poly}(n))$ ancillary qubits, and $O({\rm
poly}(n))$ output qubits and $D$ is a quantum circuit on $O({\rm
poly}(n))$ qubits consisting of pairwise commuting gates that are
diagonal in the $Z$-basis $\{|0\rangle,|1\rangle\}$. The generalized
version is the circuit $(F_n^\dag\otimes H^{\otimes l})D(F_n\otimes
H^{\otimes l})$, where $l=O(\log n)$. The input qubits and output qubits
of the circuit are the input qubits of $F_n$ and the ancillary qubits on
which $H^{\otimes l}$ is applied, respectively. In particular, when
$F_n=A_n$ and $D$ is a quantum circuit consisting of controlled
phase-shift gates, the whole circuit becomes the circuit in
Theorem~\ref{commuting}. We show that the weak simulatability of $F_n$
implies that of the whole circuit:
\begin{theorem}\label{sois}
 If $F_n$ is weakly simulatable, then $(F_n^\dag\otimes H^{\otimes
 l})D(F_n\otimes H^{\otimes l})$ with $l=O(\log n)$ output qubits is
 also weakly simulatable.
\end{theorem}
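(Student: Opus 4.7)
The plan is to show that the output distribution $P(\cdot)$ on the $l$-qubit register $B$ (where $H^{\otimes l}$ acts) factors as a convex combination $P(y)=\sum_u p(u)\,q_y(u)$ of two classically tractable distributions, where $p$ comes from $F_n$ and $q_\cdot(u)$ comes from a small $O(\log n)$-qubit circuit depending on $u$. Call the qubits on which $F_n$ acts register $A$, and, since $D$ is $Z$-diagonal, write $D=\sum_{u,z}d_{u,z}\,|u\rangle\langle u|_A\otimes|z\rangle\langle z|_B$ with $|d_{u,z}|=1$. Let $D_z:=\sum_u d_{u,z}|u\rangle\langle u|_A$, the $Z$-diagonal operator on $A$ obtained by fixing $B$ at $|z\rangle$. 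Expanding $P(y)=\sum_w|\langle w|_A\langle y|_B(F_n^\dag\otimes H^{\otimes l})D(F_n\otimes H^{\otimes l})|0\rangle|^2$ and using $\sum_w|w\rangle\langle w|=I$ followed by $F_nF_n^\dag=I$ collapses the $(z,z')$ cross term to $\langle 0|F_n^\dag D_{z'}^\dag D_z F_n|0\rangle=\sum_u \overline{d_{u,z'}}d_{u,z}\,p(u)$ with $p(u):=|\langle u|F_n|0\rangle|^2$ (using that $D_{z'}^\dag D_z$ is itself $Z$-diagonal). Reorganizing the $(z,z')$ sum gives
\[
P(y)=\sum_u p(u)\,q_y(u),\qquad q_y(u):=\Bigl|\tfrac{1}{2^l}\sum_z(-1)^{y\cdot z}d_{u,z}\Bigr|^2.
\]

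Both halves are classically tractable. For $p$: it is the $Z$-basis measurement distribution of $F_n|0\rangle$, so by hypothesis it is weakly simulatable. For $q_\cdot(u)$: first, by Parseval on the Walsh--Hadamard transform, $\sum_y q_y(u)=2^{-l}\sum_z|d_{u,z}|^2=1$, so it is a probability distribution on $\{0,1\}^l$; second, it is exactly the output distribution of the $l$-qubit circuit $H^{\otimes l}D_u^{(B)}H^{\otimes l}$ applied to $|0\rangle_B^{\otimes l}$, where $D_u^{(B)}$ is the $Z$-diagonal constant-local circuit on $B$ obtained from $D$ by restricting each gate with the $A$-qubits clamped to bits of $u$. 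Since this circuit has only $l=O(\log n)$ qubits, its full $2^l=\mathrm{poly}(n)$-entry output distribution can be computed in classical polynomial time by direct state-vector manipulation.

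The simulation algorithm then reads: draw $u\sim\tilde p$ using the weak simulator for $F_n$; classically compute $\{q_y(u)\}_y$ and draw $y$ from it; output $y$. Its output distribution $\sum_u\tilde p(u)q_y(u)$ differs from $P$ in total variation by at most $\|\tilde p-p\|_{\mathrm{TV}}$ (since $\sum_y q_y(u)=1$ for each $u$), which is exponentially small by assumption. The main obstacle is the opening algebraic step: one must notice both that the cross-term collapse yields a single $Z$-diagonal operator $D_{z'}^\dag D_z$, so that all $F_n$-dependence funnels through the marginal $p$, and that the remaining $(z,z')$-sum reorganizes into the squared Walsh--Hadamard transform of the phases $\{d_{u,\cdot}\}$, exposing $q_y(u)$ as the output of a small circuit. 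Once this factorization is secured, both halves simulate by construction---the first by the hypothesis on $F_n$, the second because only $O(\log n)$ qubits remain.
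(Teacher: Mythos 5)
Your proof is correct and follows essentially the same route as the paper's: factor the output distribution as $\sum_{z}\Pr[F_n(x)=z]\,q_y(z)$ with $q_y(z)$ the output distribution of the $O(\log n)$-qubit circuit $H^{\otimes l}D_z H^{\otimes l}$, sample $z$ with the assumed simulator for $F_n$, and brute-force the small circuit classically. Two points you gloss over but that resolve exactly as in the paper: the weak simulator for $F_n$ only supplies the $t$ output bits of $u$ (which suffices, since $D$ acts only on those qubits of register $A$, so $q_y(u)$ depends only on them), and to make $\sum_u|\tilde p(u)-p(u)|$ exponentially small one should invoke the weak-simulatability definition with error parameter $p(n)+t$ so that the sum over the $2^t$ outcomes is still at most $2^{-p(n)}$.
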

\noindent
The above-mentioned result in~\cite{Bremner} corresponds to the case
when $F_n$ is a tensor product of $H$. Theorem~\ref{sois} implies an
interesting suggestion on how to improve Theorem~\ref{commuting}. As
described above, the 5-local commuting quantum circuit in
Theorem~\ref{commuting} is constructed by choosing a depth-3 quantum
circuit as $F_n$. A possible way to improve Theorem~\ref{commuting}, or
more concretely, a possible way to construct a 3- or 4-local commuting
quantum circuit that is not weakly simulatable would be to somehow
choose a depth-2 quantum circuit as $F_n$. Theorem~\ref{sois} implies
that such a construction is impossible. This is because, since any
depth-2 quantum circuit is weakly simulatable~\cite{Terhal,Markov},
choosing a depth-2 quantum circuit as $F_n$ yields only a weakly
simulatable quantum circuit.

We show Theorem~\ref{sois} by simply generalizing the proof of the
above-mentioned result in~\cite{Bremner}. More precisely, we fix the
states of the qubits other than the $O(\log n)$ output qubits on the
basis of the assumption in Theorem~\ref{sois} and then follow the change
of the states of the output qubits. This yields a polynomial-time
classical algorithm for weakly simulating $(F_n^\dag\otimes H^{\otimes
l})D(F_n\otimes H^{\otimes l})$.

Lastly, we apply the argument for proving Theorem~\ref{commuting}
to Clifford circuits with $n$ input qubits, $O({\rm poly}(n))$ ancillary
qubits in a product state, and $O(\log n)$ output qubits. A simple
extension of the proof in~\cite{Clark,Jozsa} implies that any Clifford
circuit in the setting is strongly simulatable. We provide evidence that
a slightly extended circuit is not weakly simulatable:
\begin{theorem}\label{Clifford-1}
 There exists a Clifford circuit augmented by a depth-1 non-Clifford
 layer with $O({\rm poly}(n))$ ancillary qubits in a particular product
 state and with $O(\log n)$ output qubits such that it is not weakly
 simulatable unless $\ph$ collapses to the third level.
\end{theorem}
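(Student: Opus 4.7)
The plan is to mirror the proof of Theorem~\ref{commuting}, adapting it from the 5-local commuting setting to the Clifford-plus-depth-1-non-Clifford setting. First, I would identify a source circuit $B_n$ that is itself a Clifford circuit augmented by a depth-1 non-Clifford layer, takes $n$ input qubits, uses $O({\rm poly}(n))$ ancillas in a particular product state, has $O({\rm poly}(n))$ output qubits, and is not weakly simulatable with additive error unless $\ph$ collapses. A natural candidate is an IQP-style circuit of the form $H^{\otimes n}\,D\,H^{\otimes n}$, where $D$ is a depth-1 layer of non-Clifford two-qubit diagonal gates (such as $\exp(i\theta\, Z\otimes Z)$ for suitable $\theta$); this already fits the Clifford + depth-1 non-Clifford template. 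Its non-weak-simulatability with additive error would follow by the same style of analysis used to handle the depth-3 circuit $A_n$ in the proof of Theorem~\ref{commuting}, combined with the IQP-type hardness results of~\cite{Bremner}.

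Second, I would reduce the number of output qubits from $O({\rm poly}(n))$ down to $O(\log n)$ by combining $B_n$ with the OR reduction circuit~\cite{Hoyer}, exactly as in the proof of Theorem~\ref{commuting}: the target circuit is essentially $B_n^\dag G B_n$, where $G$ is the commuting OR reduction (possibly conjugated by appropriate Hadamard layers on the $O(\log n)$ output register, as in Theorem~\ref{sois}). The same postselection-based argument used for Theorem~\ref{commuting} would then show that weak simulatability of this combined circuit implies weak simulatability of $B_n$, and hence collapse of $\ph$ to the third level.

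Third, and this is the main technical content, I would have to verify that $B_n^\dag G B_n$ can be reshaped into a Clifford circuit augmented by a single depth-1 non-Clifford layer, acting on ancillas in some particular product state. As written, this circuit contains three separate sources of non-Clifford operations: the layer $D$ inside $B_n$, the layer $D^\dag$ inside $B_n^\dag$, and the non-Clifford controlled phase-shifts inside $G$. The plan is to introduce additional ancillas in a particular product state (for instance a product of $|0\rangle$, $|+\rangle$, and fixed phase states) and to use Clifford routing (CNOTs, SWAPs, Hadamards) to transport the relevant data onto a disjoint block of ancillas for each non-Clifford subcircuit. The goal is that $D$, $D^\dag$, and the non-Clifford part of $G$ then act on pairwise disjoint sets of qubits and can be executed simultaneously as one depth-1 non-Clifford layer, with every other gate being Clifford.

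The hard part will be this third step. Concretely, one must check that Clifford-only routing really is sufficient to place the three non-Clifford subcircuits on disjoint registers while preserving the overall unitary action of $B_n^\dag G B_n$ on the $n$ input qubits and the $O(\log n)$ output register, and that the particular product state chosen for the ancillas, together with the Clifford portions before and after the depth-1 non-Clifford layer, exactly reproduces the intended action up to Clifford corrections that can be absorbed into those Clifford portions. If this combinatorial circuit-manipulation step goes through, the non-simulatability of the resulting circuit follows immediately from the non-simulatability of $B_n$ via the argument already used in Theorem~\ref{commuting}.
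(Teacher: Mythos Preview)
Your proposal has two genuine gaps, and it misses the paper's central simplifying idea.

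First, your candidate $B_n=H^{\otimes n}DH^{\otimes n}$ with $D$ a depth-1 layer of two-qubit diagonal gates is a tensor product of independent two-qubit circuits (each qubit participates in at most one gate of $D$), hence strongly simulatable; in the paper's terminology this is even a depth-2 circuit, covered by the Terhal--DiVincenzo result cited in Section~1. So your source circuit is not hard, and the first step already fails. The IQP hardness results of~\cite{Bremner} require a $D$ of polynomial depth, not depth~1.

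Second, even granting a suitable $B_n$, your third step does not work as described. The three non-Clifford blocks $D$, the phase-shifts in $G$, and $D^\dag$ are temporally ordered with nontrivial Clifford operations between them (the Hadamards of $B_n$ and the routing of $G$), and ``Clifford routing'' onto disjoint ancillas cannot change that temporal order: SWAPs and CNOTs move data in space, not in time. Making them simultaneous would require commuting each non-Clifford block past the intervening Cliffords, which conjugates them into different, generally non-local, non-Clifford operators. The only standard way to collapse such layers is gate teleportation with magic states, which needs either adaptive Clifford corrections or additional postselection---neither of which you invoke.

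The paper avoids all of this by a different choice of source circuit. It takes $A_n$ to be a \emph{pure Clifford} circuit acting on magic-state ancillas $|\varphi\rangle^{\otimes b}$ (the Jozsa--Van den Nest construction), so $A_n$ itself contributes no non-Clifford gates. It then appends the OR reduction directly (no $A_n^\dag$ is needed here, since commutativity is irrelevant for Theorem~\ref{Clifford-1}). The only non-Clifford gates in the whole circuit are the controlled phase-shifts in the middle of the OR reduction, and the paper uses the unbounded-fan-out version of that circuit~\cite{Hoyer}: decomposing each fan-out into CNOTs turns the middle part into a single depth-1 layer of one-qubit rotations sandwiched between Clifford fan-out/fan-in blocks. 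That is the entire structural argument, and it sidesteps both of your difficulties.
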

\noindent
Similar to Theorems~\ref{commuting} and~\ref{sois},
Theorem~\ref{Clifford-1} contributes to understanding a subtle
difference between quantum and classical computation. As in the proof of
Theorem~\ref{commuting}, using the result in~\cite{Jozsa}, we show the
existence of a Clifford circuit that is not weakly simulatable with
respect to additive error (under a plausible assumption), where it has
$n$ input qubits, $O({\rm poly}(n))$ ancillary qubits in a particular
product state, and $O({\rm poly}(n))$ output qubits. Then, we combine
the Clifford circuit with a constant-depth OR reduction circuit with
unbounded fan-out gates~\cite{Hoyer}. The resulting circuit has $O(\log
n)$ output qubits and is not weakly simulatable (under a plausible
assumption). By decomposing the unbounded fan-out gates into CNOT gates,
we transform the combination of the Clifford circuit and OR reduction
circuit into a Clifford circuit augmented by a depth-1 non-Clifford
layer, which implies Theorem~\ref{Clifford-1}. A similar argument with a
constant-depth quantum circuit for the OR function with unbounded
fan-out gates~\cite{Takahashi} implies that the number of output qubits
can further be decreased to one at the cost of adding one more depth-1
non-Clifford layer.

\section{Preliminaries}

\subsection{Quantum Circuits}

We use the standard notation for quantum states and the standard
diagrams for quantum circuits~\cite{Nielsen}. The elementary gates in
this paper are a Hadamard gate $H$, a phase-shift gate $R(\theta)$ with
angle $\theta =\pm 2\pi/2^k$ for any $k \in {\mathbb N}$, and a
controlled-$Z$ gate $\Lambda Z$, where
$$
H=\frac{1}{\sqrt{2}}
\begin{pmatrix}
1 & 1 \\
1 & -1
\end{pmatrix},\
R(\theta)=
\begin{pmatrix}
1 & 0 \\
0 & e^{i\theta}
\end{pmatrix},\
\Lambda Z=
\begin{pmatrix}
1 & 0 & 0 & 0 \\
0 & 1 & 0 & 0 \\
0 & 0 & 1 & 0 \\
0 & 0 & 0 & -1
\end{pmatrix}.
$$
We denote $R(\pi)$, $R(\pi/2)$, and $HR(\pi)H$ as $Z$, $P$, and $X$,
respectively, where $Z$ and $X$ (with $Y=iXZ$ and identity $I$) are
called Pauli gates. We also denote $H\Lambda Z H$ as $\Lambda X$, 
which is a CNOT gate, where $H$ acts on the target qubit. A quantum
circuit consists of the elementary gates. In particular, when a quantum
circuit consists only of $H$, $P$, and $\Lambda Z$, it is called a
Clifford circuit. A commuting quantum circuit is a quantum circuit
consisting of pairwise commuting gates, where we do not require that
each commuting gate be one of the elementary gates. In other words, when
we think of a quantum circuit as a commuting quantum circuit, we are
allowed to regard a group of elementary gates in the circuit as a single
gate and we require that such gates, which are not necessarily
elementary gates, be pairwise commuting.

The complexity measures of a quantum circuit are its size and depth. The
size is the number of elementary gates in the circuit. To define the
depth, we consider the circuit as a set of layers $1,\ldots,d$
consisting of one-qubit and two-qubit gates, where gates in the same
layer act on pairwise disjoint sets of qubits and any gate in layer $j$
is applied before any gate in layer $j+1$. The depth of the circuit is
the smallest possible value of $d$~\cite{Fenner}. It seems to be natural
to require that each gate in a layer be one of the elementary gates,
but we do not require this for simplicity and we consider one-qubit and
two-qubit gates determined from the context. In other words, when we
count the depth, we are allowed to consider one-qubit and two-qubit
gates generated by elementary gates in the circuit. Regardless of
whether we adopt the requirement or not, the depth of the circuit we are
interested in is a constant. A quantum circuit can use ancillary qubits
initialized to $|0\rangle$. We do not require that the states of the
ancillary qubits be reset to $|0\rangle$ at the end of the computation.

We deal with a uniform family of polynomial-size quantum circuits
$\{C_n\}_{n\geq 1}$, where each $C_n$ is a quantum circuit with $n$
input qubits and $O({\rm poly}(n))$ ancillary qubits, and can use
phase-shift gates with angles $\theta=\pm2\pi/2^k$ for any $k=O({\rm
poly}(n))$. Some of the input and ancillary qubits are called output
qubits. At the end of the computation, $Z$-measurements, i.e.,
measurements in the $Z$-basis, are performed on the output qubits. The
uniformity means that there exists a polynomial-time deterministic 
classical algorithm for computing the function $1^n \mapsto
\overline{C_n}$, where $\overline{C_n}$ is the classical description of
$C_n$. A symbol denoting a quantum circuit, such as $C_n$, also denotes
its matrix representation in some fixed basis. Any quantum circuit in
this paper is understood to be an element of a uniform family of
polynomial-size quantum circuits and thus, for simplicity, we deal with
a quantum circuit $C_n$ in place of a family $\{C_n\}_{n\geq 1}$. We
require that each commuting gate in a commuting quantum circuit act on a
constant number of qubits. When all commuting gates act on at most $c$
qubits for some constant $c\geq 2$, the circuit is said to be
$c$-local~\cite{Ni}.

\subsection{Classical Simulatability and Complexity Classes}

We deal with a uniform family of polynomial-size classical circuits to
 model a polynomial-time deterministic classical algorithm. Similarly,
 to model its probabilistic version, we deal with a uniform family of
 polynomial-size randomized classical circuits, each of which has a
 register initialized with random bits for each run of the
 computation~\cite{Bremner}. As in the case of quantum circuits, for
 simplicity, we consider a classical circuit in place of a family of
 classical circuits.

 Let $C_n$ be a polynomial-size quantum circuit with $n$ input qubits,
 $O({\rm poly}(n))$ ancillary qubits, and $m$ output qubits. For any $x
 \in \{0,1\}^n$, there exists an output probability distribution
 $\{(y,{\rm Pr}[C_n(x)=y])\}_{y \in \{0,1\}^m}$, where ${\rm
 Pr}[C_n(x)=y]$ is the probability of obtaining $y\in \{0,1\}^m$ by
 $Z$-measurements on the output qubits of $C_n$ with the input state
 $|x\rangle$. The classical simulatability of $C_n$ is defined as
 follows~\cite{Terhal,van1,Bremner,van2,Ni,Jozsa,Takahashi-sim}:
\begin{definition}\label{sim}
\begin{itemize}
 \item $C_n$ is strongly simulatable if the output
       probability ${\rm Pr}[C_n(x)=y]$ and its marginal output
       probability can be computed up to an exponentially small additive
       error in classical $O({\rm poly}(n))$ time. More precisely, for
       any polynomial $p$, there exists a polynomial-size classical
       circuit $D_n$ such that, for any $x\in\{0,1\}^n$ and
       $y\in\{0,1\}^m$, 
       $$|D_n(x,y)-{\rm Pr}[C_n(x)=y]| \leq \frac{1}{2^{p(n)}},$$
       and, when we choose arbitrary $m'$ output qubits from the $m$
       output qubits of $C_n$ for any $m'< m$, the output probability
       ${\rm Pr}[C_n(x)=y']$ can be computed similarly for any
       $x\in\{0,1\}^n$ and $y'\in \{0,1\}^{m'}$.

 \item $C_n$ is weakly simulatable if the output
       probability distribution $\{(y,{\rm Pr}[C_n(x)=y])\}_{y \in
       \{0,1\}^m}$ can be sampled up to an exponentially small additive
       error in classical $O({\rm poly}(n))$ time. More precisely, for
       any polynomial $p$, there exists a polynomial-size randomized
       classical circuit $R_n$ such that, for any $x\in\{0,1\}^n$ and
       $y\in\{0,1\}^m$,
       $$|{\rm Pr}[R_n(x)=y]-{\rm Pr}[C_n(x)=y]| \leq
       \frac{1}{2^{p(n)}}.$$
\end{itemize}
\end{definition}
\noindent
Any strongly simulatable quantum circuit is weakly
simulatable~\cite{Terhal,Bremner}.

The following two complexity classes are important for our
discussion~\cite{Aaronson0,Bremner,Han}:
\begin{definition}
Let $L \subseteq \{0,1\}^*$.
\begin{itemize}
 \item $L \in \postbqp$ if there exists a polynomial-size quantum circuit
       $C_n$ with $n$ input qubits, $O({\rm poly}(n))$ ancillary qubits,
       one output qubit, and one particular qubit (other than the output
       qubit) called the postselection qubit such that, for any $x \in
       \{0,1\}^n$,
\begin{itemize}
 \item  ${\rm Pr}[{\rm post}_n(x)=0] > 0$,

 \item if $x \in L$, ${\rm Pr}[C_n(x)=1|{\rm post}_n(x)=0] \geq 2/3$,

 \item if $x \notin L$, ${\rm Pr}[C_n(x)=1|{\rm post}_n(x)=0] \leq
       1/3$,
 \end{itemize}
where the event ``${\rm post}_n(x)=0$'' means that the classical outcome
       of the $Z$-measurement on the postselection qubit is 0.
 \item $L \in \postbpp$ if there exists a polynomial-size randomized
       classical circuit $R_n$ with $n$ input bits that, for any $x
       \in \{0,1\}^n$, outputs $R_n(x),{\rm post}_n(x) \in \{0,1\}$ such
       that
\begin{itemize}
 \item ${\rm Pr}[{\rm post}_n(x)=0] > 0$,

 \item if $x \in L$, ${\rm Pr}[R_n(x)=1|{\rm post}_n(x)=0] \geq 2/3$,

 \item if $x \notin L$, ${\rm Pr}[R_n(x)=1|{\rm post}_n(x)=0] \leq
       1/3$.
 \end{itemize}
\end{itemize}
\end{definition}
\noindent
We use the notation ${\rm post}_n(x)=0$ both in the quantum and
classical settings, but the meaning will be clear from the
context. Another important class is the polynomial hierarchy $\ph =
\bigcup_{j\geq 1} \Delta_j^p$. Here, $\Delta_1^p = \p$ and
$\Delta_{j+1}^p = \p^{{\sf N}\Delta_j^p}$ for any $j\geq 1$, where $\p$
is the class of languages decided by polynomial-time classical
algorithms and ${{\sf N}\Delta_j^p}$ is the non-deterministic class
associated to $\Delta_j^p$~\cite{Papadimitriou,Bremner}. It is widely
believed that $\ph \neq \Delta_j^p$ for any $j\geq
1$~\cite{Papadimitriou}. As shown in~\cite{Bremner}, if $\postbqp
\subseteq \postbpp$, then $\ph = \Delta_3^p$, i.e., $\ph$ collapses to
the third level. It can be shown that, in our setting of elementary
gates and quantum circuits, this relationship also holds when the
condition ${\rm Pr}[{\rm post}_n(x)=0] > 0$ in the definition of
$\postbqp$ is replaced with the condition that, for some polynomial $q$
(depending only on $C_n$), ${\rm Pr}[{\rm post}_n(x)=0] \geq
1/2^{q(n)}$. In the following, we adopt the latter condition.

\section{Commuting Quantum Circuits}

\subsection{Hardness of the Weak Simulation}

It is known that there exists a depth-3 quantum circuit with $n$ input
qubits, $O({\rm poly}(n))$ ancillary qubits, and $O({\rm poly}(n))$
output qubits such that it is not weakly simulatable with respect to
multiplicative error unless $\ph$ collapses to the third
level~\cite{Bremner}. We first analyze its weak simulatability with
respect to additive error and show the following lemma:
\begin{lemma}\label{depth-3}
There exists a depth-3 polynomial-size quantum circuit with $O({\rm
 poly}(n))$ output qubits such that it is not weakly simulatable (with
 respect to additive error) unless $\ph$ collapses to the third level.
\end{lemma}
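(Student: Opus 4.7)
My plan is to upgrade the multiplicative-error hardness proved in~\cite{Bremner,Fenner} to hardness with respect to the exponentially small additive error of Definition~\ref{sim}, exploiting the fact that the simulator must be accurate to within $1/2^{p(n)}$ for \emph{every} polynomial $p$, not just some fixed $p$. Fix a $\postbqp$-complete language $L$; the construction in~\cite{Bremner,Fenner} then supplies a depth-$3$ polynomial-size quantum circuit $A_n$ with $m=O({\rm poly}(n))$ output qubits, a designated postselection bit, and a designated output bit that decides $L$ in the sense of $\postbqp$. The mild but important refinement I would make explicit is that the postselection probability ${\rm Pr}[{\rm post}_n(x)=0]$ can be assumed to be at least $2^{-q(n)}$ for some polynomial $q$; this follows by inspection of the construction, or, if needed, by padding it with a constant-depth gadget that forces the postselection event to occur with a controllable inverse-exponential probability without disturbing the $\postbqp$ criterion or the depth.

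Assume for contradiction that $A_n$ is weakly simulatable. Taking $p(n):=2q(n)+m(n)+n$ in Definition~\ref{sim}, there is a polynomial-size randomized classical circuit $R_n$ such that $|{\rm Pr}[R_n(x)=y]-{\rm Pr}[A_n(x)=y]|\leq 2^{-p(n)}$ for every $x$ and every $y\in\{0,1\}^m$. Designate the same two bits of $R_n$'s output as the classical postselection bit and output bit. By the triangle inequality, the induced two-bit marginals differ in total variation by at most $2^{m}\cdot 2^{-p(n)}\leq 2^{-2q(n)-n}$. In particular, the classical postselection probability is at least $2^{-q(n)}-2^{-2q(n)-n}\geq 2^{-q(n)-1}>0$, and the elementary bound
\begin{equation*}
\left|\frac{a}{b}-\frac{a'}{b'}\right|\leq \frac{|a-a'|}{b'}+\frac{a\,|b-b'|}{b\,b'}
\end{equation*}
propagates the two-bit marginal error to at most $O(2^{-n})$ on the conditional acceptance probability, easily preserving the $2/3$-vs-$1/3$ gap. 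Hence $L\in\postbpp$, and since $L$ is $\postbqp$-complete we obtain $\postbqp\subseteq\postbpp$. Combining this with Aaronson's identity $\postbqp=\mathsf{PP}$ and Toda's theorem, as in~\cite{Bremner}, collapses $\ph$ to $\Delta_3^p$, contradicting the hypothesis of the lemma.

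The main obstacle, and the only genuinely new ingredient beyond~\cite{Bremner,Fenner}, is the quantitative control on the postselection probability: exponentially small additive error on the sampled distribution only converts into a useful approximation for quantities that are themselves at least inverse-exponentially large, so we need the postselection event of $A_n$ to admit an explicit $2^{-q(n)}$ lower bound. Once this quantitative refinement is in place, every other step is a careful bookkeeping version of the standard postselection-to-$\postbpp$ argument and requires no essentially new idea.
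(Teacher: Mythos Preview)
Your proposal is correct and follows essentially the same route as the paper: both secure an inverse-exponential lower bound on the postselection probability of the depth-$3$ circuit from~\cite{Fenner} (the paper does so explicitly by observing that the $b$ teleportation postselection qubits succeed with probability exactly $1/2^b$, giving ${\rm Pr}[{\rm qpost}_n(x)=0^{b+1}]\geq 1/2^{b+q}$), and then argue that an additive-error weak simulator with $p$ chosen large relative to this bound would place $L$ in $\postbpp$, forcing $\postbqp\subseteq\postbpp$ and the collapse. One minor imprecision worth fixing: the Fenner depth-reduction produces $b+1$ postselection qubits rather than a single ``designated postselection bit,'' so the relevant event is $0^{b+1}$ on those qubits and your two-bit marginal should really be the full $(b{+}2)$-bit output distribution---but your choice $p(n)=2q(n)+m(n)+n$ already absorbs this and the arithmetic goes through unchanged.
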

\begin{proof}
We assume that $\ph$ does not collapse to the third level. Then, as
 described above, $\postbqp \nsubseteq \postbpp$. Let $L \in
 \postbqp\setminus \postbpp$. Then, there exists a polynomial-size 
 quantum circuit $C_n$ with $n$ input qubits, $a=O({\rm poly}(n))$
 ancillary qubits, one output qubit, and one postselection qubit (and
 some polynomial $q$) such that, for any $x \in \{0,1\}^n$,
\begin{itemize}
 \item ${\rm Pr}[{\rm post}_n(x)=0] \geq 1/2^{q(n)}$,

 \item if $x \in L$, ${\rm Pr}[C_n(x)=1|{\rm post}_n(x)=0] \geq 2/3$,

 \item if $x \notin L$, ${\rm Pr}[C_n(x)=1|{\rm post}_n(x)=0] \leq
       1/3$.
\end{itemize}
As shown in~\cite{Fenner}, there exists a depth-3 polynomial-size
 quantum circuit $A_n$ with $n$ input qubits, $a+b$ ancillary qubits,
 and one output qubit such that, for any $x \in \{0,1\}^n$,
\begin{itemize}
 \item if $x \in L$, Pr$[A_n(x)=1|{\rm qpost}_n(x)=0^{b+1}] \geq 2/3$,

 \item if $x \notin L$, Pr$[A_n(x)=1|{\rm qpost}_n(x)=0^{b+1}] \leq
       1/3$,
\end{itemize}
where $b=O({\rm poly}(n))$, the event ``${\rm qpost}_n(x)=0^{b+1}$''
 means that all classical outcomes of $Z$-measurements on the qubit
 corresponding to the postselection qubit of $C_n$ and particular $b$
 qubits (other than the output qubit) are 0. We call these $b+1$ qubits
 the postselection qubits of $A_n$. Since the probability of obtaining
 $0^b$ by $Z$-measurements on the $b$ qubits is $1/2^b$~\cite{Fenner},
 it holds that
$${\rm Pr}[{\rm qpost}_n(x)=0^{b+1}]=\frac{1}{2^{b}}\cdot {\rm Pr}[{\rm
 post}_n(x)=0] \geq \frac{1}{2^{b+q}}.$$
We regard $A_n$, which has only one output qubit, as a new circuit with
 $b+2$ output qubits, where one of the output qubits is the original
 output qubit $q_{\rm out}$ of $A_n$ and the others are the $b+1$
 postselection qubits of $A_n$. We also denote this circuit as
 $A_n$. Thus, $A_n$ is a depth-3 polynomial-size quantum circuit with
 $O({\rm poly}(n))$ output qubits. For any $x \in \{0,1\}^n$,
\begin{itemize}
 \item ${\rm Pr}[A_n(x)=0^{b+1}1]={\rm Pr}[A_n(x)=1\&{\rm
       qpost}_n(x)=0^{b+1}]$,
 \item ${\rm Pr}[A_n(x)=0^{b+1}0]={\rm Pr}[A_n(x)=0\&{\rm
       qpost}_n(x)=0^{b+1}]$,
\end{itemize}
where, for simplicity, we assume that the last output qubit of $A_n$ is
 $q_{\rm out}$. Thus, for any $x \in \{0,1\}^n$,
\begin{itemize}
 \item if $x \in L$, ${\rm Pr}[A_n(x)=0^{b+1}1] \geq 2\cdot{\rm
       Pr[qpost}_n(x)=0^{b+1}]/3$,
 \item if $x \notin L$, ${\rm Pr}[A_n(x)=0^{b+1}1] \leq {\rm
       Pr[qpost}_n(x)=0^{b+1}]/3$.
\end{itemize}
We can show that, if $A_n$ is weakly simulatable, then $L \in
 \postbpp$. This contradicts the assumption that $L\notin \postbpp$ and
 completes the proof. The details can be found in Appendix~A.1.
\end{proof}
The proof method of Lemma~\ref{depth-3} can be considered as an
elaborated version of the one in~\cite{Takahashi-sim}. As pointed out by
Nishimura and Morimae~\cite{Nishimura}, we note that their proof method
in~\cite{Morimae} based on the complexity class $\sf
SBQP$~\cite{Kuperberg} can also be used to show the lemma.

The OR reduction circuit reduces the computation of the OR
function on $b$ bits to that on $O(\log b)$ bits~\cite{Hoyer}: for any
$b$-qubit input state $|x_1\rangle\cdots |x_b\rangle$ with
$x_j\in\{0,1\}$, the circuit outputs $|0\rangle^{\otimes m}$ if $x_j=0$
for every $j$ and an $m$-qubit state orthogonal to $|0\rangle^{\otimes
m}$ if $x_j=1$ for some $j$, where $m=\lceil \log(b+1)\rceil$. Besides
the $b$ input qubits, the circuit has $m$ ancillary qubits as output
qubits. The first part of the circuit is a layer consisting of $H$ gates
on the ancillary qubits. The middle part is a quantum circuit consisting
of $b$ controlled-$R(2\pi/2^k)$ gates over all $1\leq k \leq m$, where
each gate uses an input qubit as the control qubit and an ancillary
qubit as the target qubit. Such a gate is not an elementary gate, but it
can be decomposed into a sequence of elementary gates. The last part is
the same as the first one. We call the circuit the non-commuting OR
reduction circuit. It is depicted in Fig.~\ref{figure1}(a), where
$b=3$.

An important observation is that the non-commuting OR reduction circuit
can be transformed into a 2-local commuting quantum circuit. This is
shown by considering a quantum circuit consisting of gates $g_j$ on two
qubits, where each $g_j$ is a controlled-$R(2\pi/2^k)$ gate, which is in
the non-commuting OR reduction circuit, sandwiched between Hadamard
gates on the target qubit. Since $HH=I$ and controlled-$R(2\pi/2^k)$
gates are pairwise commuting gates on two qubits, the operation
performed by the circuit is the same as that performed by the
non-commuting OR reduction circuit and the gates $g_j$ are pairwise
commuting gates on two qubits. We call the circuit the commuting OR
reduction circuit. It is depicted in Fig.~\ref{figure1}(b), where
$b=3$. Combining this commuting OR reduction circuit with $A_n$ in the
above proof implies the following lemma:

\begin{figure}[t]
\centering
\includegraphics[scale=.6]{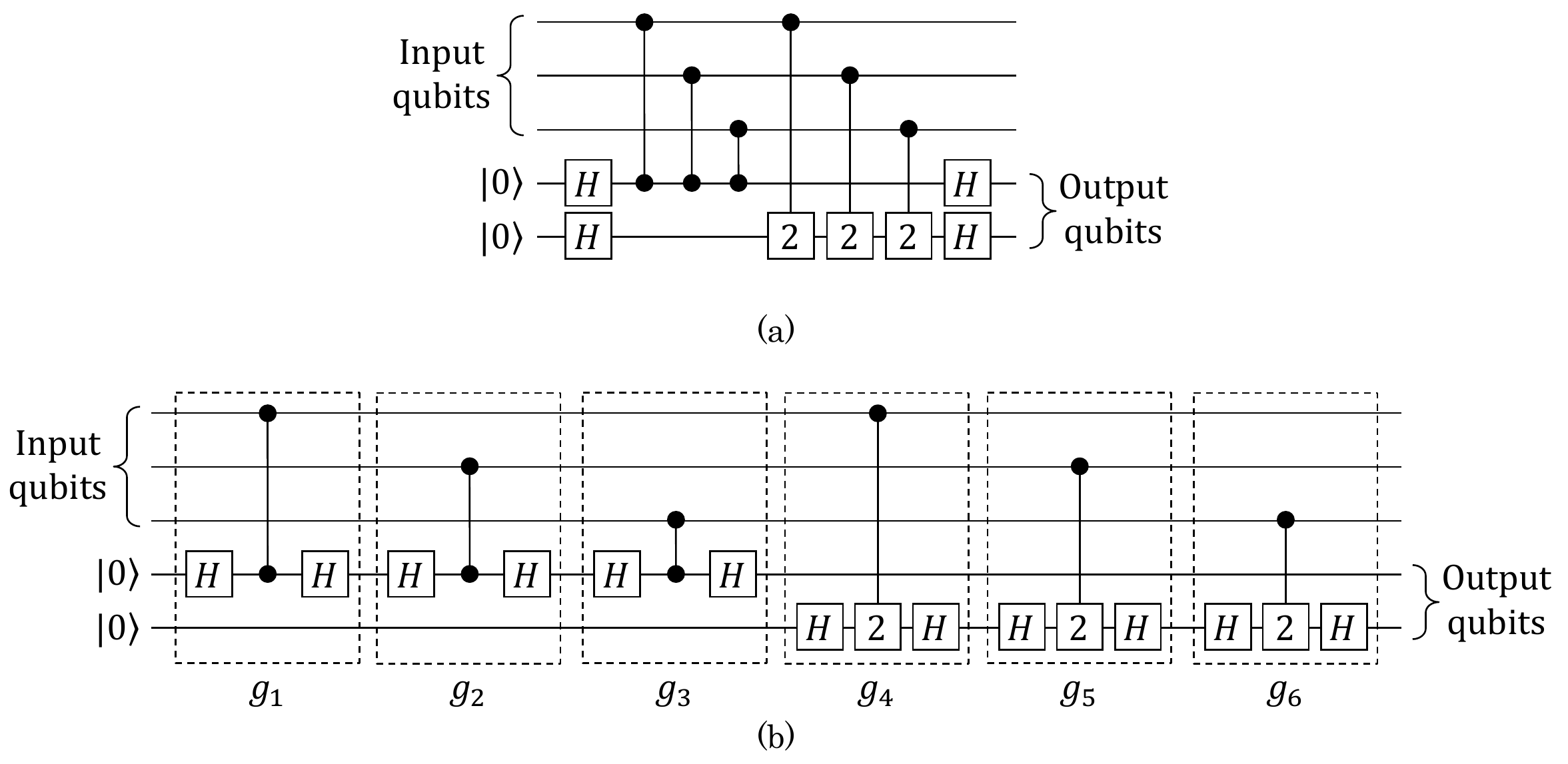}
\caption{(a) The non-commuting OR reduction circuit, where $b=3$, the
 gate represented by two black circles connected by a vertical line is a
 $\Lambda Z$ gate, i.e., a controlled-$R(2\pi/2^1)$ gate, and the gate
 represented by ``2'' is an $R(2\pi/2^2)$ gate. (b) The commuting OR
 reduction circuit, where $b=3$.}
\label{figure1}
\end{figure}

\begin{lemma}\label{depth-3+OR}
There exists a commuting quantum circuit with $O(\log n)$ output qubits
 such that it is not weakly simulatable unless $\ph$ collapses to the
 third level.
\end{lemma}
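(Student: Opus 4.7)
The plan is to combine the depth-3 circuit $A_n$ from Lemma~\ref{depth-3} with \emph{two} copies of the commuting OR reduction circuit, so that the overall unitary decomposes into pairwise commuting gates of constant support while the accept/postselect probabilities of $A_n$ remain extractable via classical postselection on $O(\log n)$ output bits. Concretely, let $G_1$ be a commuting OR reduction acting on the $b+1$ postselection qubits of $A_n$ together with $m=\lceil\log(b+2)\rceil$ fresh ancillas, and let $G_2$ be a commuting OR reduction acting on those same $b+1$ qubits together with the output qubit $q_{\mathrm{out}}$ of $A_n$ plus $m'=\lceil\log(b+3)\rceil$ fresh ancillas. Define
\[
C_n := A_n^\dagger (G_1 G_2) A_n
\]
on input $|x\rangle$ with all ancillas initialized to $|0\rangle$, and declare the $m+m'=O(\log n)$ OR-reduction ancillas as the output qubits.

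To exhibit $C_n$ as a commuting quantum circuit, I would write $C_n=\prod_g (A_n^\dagger g A_n)$, the product ranging over all two-qubit gates $g$ of $G_1$ and $G_2$. The $g$'s pairwise commute: each is a controlled-$R(\pm 2\pi/2^k)$ sandwiched by Hadamards on its target, and any two such gates in $G_1\cup G_2$ either act on disjoint qubits or share only a qubit (a common control, or a common target ancilla) on which both gates act diagonally (or by commuting phase rotations). Conjugation by the unitary $A_n$ preserves commutation, so the gates $A_n^\dagger g A_n$ still pairwise commute. Because $A_n$ has depth $3$, the past light cone in $A_n$ of the support of $g$ has $O(1)$ size, and the OR-reduction ancillas lie outside $A_n$; hence each $A_n^\dagger g A_n$ acts on only a constant number of qubits.

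A direct expansion of $A_n|x\rangle|0\rangle=\sum_{y,v}|\psi_{y,v}\rangle|v\rangle_{q_{\mathrm{out}}}|y\rangle_{\mathrm{post}}$, using that $A_n^\dagger$ passes through the OR-reduction ancillas and that the ancilla states produced by the OR reductions are orthogonal to $|0^m\rangle$ (resp.\ $|0^{m'}\rangle$) for inputs other than $0^{b+1}$ (resp.\ $0^{b+2}$), then yields
\[
{\rm Pr}[G_1=0^m,\,G_2=0^{m'}]={\rm Pr}[A_n(x)=0^{b+1}0], \qquad {\rm Pr}[G_1=0^m,\,G_2\neq 0^{m'}]={\rm Pr}[A_n(x)=0^{b+1}1].
\]
If $C_n$ were weakly simulatable, then picking $p$ much larger than $b+q$ and running the simulator at additive error $1/2^{p(n)}$, declaring $\mathrm{post}_n(x)=0$ iff the $G_1$ sample is $0^m$ and $R_n(x)=1$ iff the $G_2$ sample is not $0^{m'}$, the postselection probability stays close to ${\rm Pr}[\mathrm{qpost}_n(x)=0^{b+1}]\geq 1/2^{b+q}$ (hence positive), and the conditional acceptance probability stays close to ${\rm Pr}[q_{\mathrm{out}}=1\mid \mathrm{qpost}_n(x)=0^{b+1}]\in[0,1/3]\cup[2/3,1]$. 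This would give $L\in\postbpp$, contradicting the choice $L\in\postbqp\setminus\postbpp$ fixed in the proof of Lemma~\ref{depth-3}.

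The main obstacle I expect is the commutation-and-light-cone bookkeeping: one must verify that gates of $G_2$ with $q_{\mathrm{out}}$ as a control genuinely commute with gates of $G_1$ sharing a postselection qubit as a control, and that the depth-$3$ structure of $A_n$ really forces every conjugated gate onto an $O(1)$ support. Everything else---in particular, propagating the exponentially small additive error through classical postselection---is routine once $p$ is chosen as a sufficiently large polynomial in $b+q$.
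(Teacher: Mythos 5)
Your proposal is correct and follows essentially the same route as the paper: conjugate each two-qubit gate of a (Hadamard-dressed) OR reduction by the depth-3 circuit $A_n$, use the depth-3 light cone to get constant ($\leq 2^3+1$) locality, and recover the $\postbqp$ postselection from the $O(\log n)$ OR-reduction ancillas exactly as in the argument for Lemma~\ref{depth-3}. The only difference is cosmetic: the paper routes $q_{\rm out}$ through a single $\Lambda X$ gate (itself a Hadamard-conjugated $\Lambda Z$, i.e.\ one more commuting gate) yielding $m+1$ output qubits, whereas you run a second OR reduction that includes $q_{\rm out}$, yielding $m+m'$ output qubits; both give the same probability identities and the same contradiction with $L\notin\postbpp$.
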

\begin{proof}
 As in the proof of Lemma~\ref{depth-3}, we can take $L \in
 \postbqp\setminus \postbpp$ and obtain a depth-3 polynomial-size
 quantum circuit $A_n$ with $n$ input qubits, $a+b$ ancillary qubits,
 and $b+2$ output qubits such that, for any $x \in \{0,1\}^n$,
\begin{itemize}
 \item if $x \in L$, ${\rm Pr}[A_n(x)=0^{b+1}1] \geq 2\cdot{\rm
       Pr[qpost}_n(x)=0^{b+1}]/3$,
 \item if $x \notin L$, ${\rm Pr}[A_n(x)=0^{b+1}1] \leq {\rm
       Pr[qpost}_n(x)=0^{b+1}]/3$.
\end{itemize}
We construct a quantum circuit $E_n$ with $n$ input qubits, $a+b+m+1$
 ancillary qubits, and $m+1$ output qubits as follows, where
 $m=\lceil\log (b+2)\rceil$. As an example, $E_n$ is depicted in
 Fig.~\ref{figure2}(a), where $n=5$, $a=0$, and $b=2$ (and thus $m=2$).
\begin{enumerate}
 \item Apply $A_n$ on $n$ input qubits and $a+b$ ancillary qubits, where
       the input qubits of $E_n$ are those of $A_n$.

 \item Apply a $\Lambda X$ gate on the last output qubit of $A_n$ and on
       an ancillary qubit (other than the ancillary qubits in Step~1),
       where the output qubit is the control qubit.

 \item Apply the commuting OR reduction circuit on the other output
       qubits of $A_n$, i.e., the $b+1$ postselection qubits of $A_n$,
       and $m$ ancillary qubits (other than the ancillary qubits in
       Steps~1 and 2), where the postselection qubits are the input
       qubits of the OR reduction circuit.

 \item Apply $A^\dag_n$ as in Step~1.
\end{enumerate}
The $m+1$ ancillary qubits used in Steps~2 and 3 are the output qubits
 of $E_n$. Step~4 does not affect the output probability distribution of
 $E_n$, but it allows us to construct the commuting quantum circuit
 described below. By the construction of $E_n$, for any $x \in
 \{0,1\}^n$,
$${\rm Pr}[A_n(x)=0^{b+1}1] = {\rm Pr}[E_n(x)=0^m1],\ {\rm
 Pr}[A_n(x)=0^{b+1}0] = {\rm Pr}[E_n(x)=0^m0].$$
This implies that $E_n$ is not weakly simulatable. The proof is the same
 as that of Lemma~\ref{depth-3} except that the number of output qubits
 we need to consider is only $m+1=O(\log n)$.

We show that there exists a commuting quantum circuit with $m+1$ output
 qubits such that its output probability distribution is the same as
 that of $E_n$. We consider a quantum circuit consisting of gates
 $A^\dag_ngA_n$ for any gate $g$ that is either a $\Lambda X$ gate in
 Step~2 of $E_n$ or $g_j$ in the commuting OR reduction circuit. The
 input qubits and output qubits of $E_n$ are naturally considered as the
 input qubits and output qubits of the new circuit, respectively. The
 circuit based on $E_n$ in Fig.~\ref{figure2}(a) is depicted in
 Fig.~\ref{figure2}(b). Since these gates $g$ in $E_n$ are pairwise
 commuting, so are the gates $A^\dag_ngA_n$. Moreover, $A^\dag_ngA_n$
 acts on a constant number of qubits (in fact, on at most $2^3+1=9$
 qubits) since the depth of $A_n$ is three, $g$ is on two qubits, and
 the number of qubits on which both $g$ and $A_n$ are applied is one. By
 the construction of the circuit, its output probability distribution is
 the same as that of $E_n$.
\end{proof}

\begin{figure}[t]
\centering
\includegraphics[scale=.6]{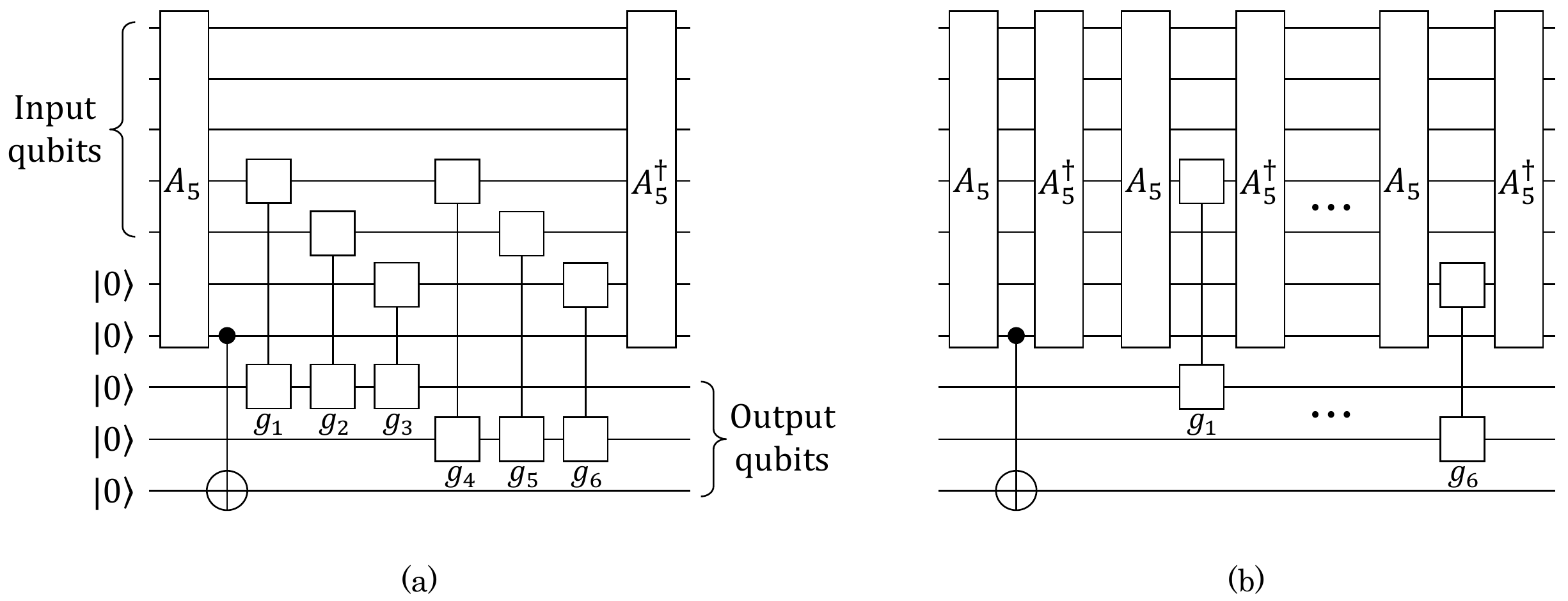}
\caption{(a) Circuit $E_n$, where $n=5$, $a=0$, and $b=2$ (and thus
 $m=2$). The gate represented by a black circle and $\oplus$ connected
 by a vertical line is a $\Lambda X$ gate. The gates $g_j$ are the ones
 in Fig.~\ref{figure1}. (b) The commuting quantum circuit based on
 $E_n$ in (a).}
\label{figure2}
\end{figure}

To complete the proof of Theorem~\ref{commuting}, it suffices to show
that the commuting quantum circuit in the proof of
Lemma~\ref{depth-3+OR} is 5-local. To show this, we give the details of
the depth-3 quantum circuit constructed by the method
in~\cite{Fenner}. The circuit is based on a one-qubit teleportation
circuit. We adopt the teleportation circuit depicted in
Fig.~\ref{figure3}(a), which is obtained from the standard one by
decomposing it into the elementary gates. If the classical outcomes of
$Z$-measurements on the two qubits other than the output qubit are 0,
the output state is the same as the input state. We call the first
measured qubit, which is the input qubit, ``the first teleportation
qubit'', and the second one ``the second teleportation qubit''.

For example, we consider the circuit depicted in Fig.~\ref{figure3}(b)
as $C_n$ in the proof of Lemma~\ref{depth-3}, where $n=2$ and $a=0$. The
depth-3 circuit $A_n$ constructed from $C_n$ by the method
in~\cite{Fenner} is depicted in Fig.~\ref{figure3}(c), where $b=6$ and
thus the total number of postselection qubits is seven. The first layer
consists of the first halves of the teleportation circuits and the third
consists of the last halves. The second layer consists of the gates in
$C_n$. The teleportation qubits are the postselection qubits. If all
classical outcomes of $Z$-measurements on the teleportation qubits are
0, all teleportation circuits teleport their input states successfully
and thus the output state is the same as that of $C_n$.

We will analyze $A^\dag_ngA_n$ in the proof of Lemma~\ref{depth-3+OR},
which implies the following lemma:
\begin{lemma}\label{5-local}
For any gate $A^\dag_ngA_n$ in the proof of Lemma~\ref{depth-3+OR},
 there exists a quantum circuit on at most five qubits that implements
 the gate.
\end{lemma}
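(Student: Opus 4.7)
The plan is to show that $A_n^\dag g A_n$ acts on at most five qubits by bounding the backward light cone of each output qubit of $A_n$ through the three layers of $A_n$.

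Since $g$ acts on two qubits---namely one qubit $q$ in ${\rm supp}(A_n)$ (which must be one of the $b+2$ output qubits of $A_n$) and one qubit $q'$ outside ${\rm supp}(A_n)$---the conjugated gate $A_n^\dag g A_n$ is supported on $\{q'\}\cup B(q)$, where $B(q)$ is the set of qubits reachable from $q$ by backward propagation through the three layers of $A_n$. The proof of Lemma~\ref{depth-3+OR} already noted the trivial bound $|B(q)|\le 2^3=8$, yielding the nine-qubit estimate; the plan is to sharpen this to $|B(q)|\le 4$ by exploiting the specific depth-3 teleportation-based structure of $A_n$ recalled just before the lemma.

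Writing $A_n=L_3L_2L_1$, where $L_1$ consists of the first halves of the teleportation circuits of Fig.~\ref{figure3}(a), $L_2$ consists of the gates of $C_n$, and $L_3$ consists of the last halves of the teleportation circuits, the argument proceeds by case analysis on the type of output qubit $q$: (i) $q=q_{\rm out}$, (ii) $q$ is the postselection qubit inherited from $C_n$, (iii) $q$ is a first teleportation qubit of some gadget, or (iv) $q$ is a second teleportation qubit of some gadget. In each case one reads off from Fig.~\ref{figure3}(a) and (c) exactly which gates in $L_1$ and $L_3$ touch the qubit in question, and uses the gate list of $C_n$ to identify the gates in $L_2$ that subsequently touch the qubits in the growing light cone. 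The improvement over the naive bound comes from the structural fact that, in each of the four cases, at least one of the three layers contributes only one-qubit gates to the light cone, so that $|B(q)|$ grows by a factor of at most $2\cdot 2\cdot 1=4$ rather than $2\cdot 2\cdot 2=8$.

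The main obstacle is the case analysis itself, and in particular the case of a first teleportation qubit, which lies on the ``main wire'' of the teleportation chain and so in principle can be touched in all three layers. Here one must check that after backward propagation through $L_3$ the light cone contains only a small constant number of qubits (the qubit itself together with its teleportation partners), that $L_2$ extends this cone by at most one further qubit coming from the $C_n$-gate acting on that wire, and that the qubits of $L_1$ touching this extended cone receive only one-qubit gates (the Hadamards used in the first halves of the teleportations). Once this bookkeeping is verified for every type of output qubit, $|B(q)|\le 4$ follows, and hence $A_n^\dag g A_n$ acts on at most $|\{q'\}\cup B(q)|\le 5$ qubits, proving Lemma~\ref{5-local}.
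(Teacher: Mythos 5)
Your overall framing (conjugate $g$ by $A_n=L_3L_2L_1$ and run a case analysis over the four possible locations of the qubit $q$ at which $g$ meets $A_n$: $q_{\rm out}$, the postselection qubit inherited from $C_n$, a first teleportation qubit, a second teleportation qubit) is the same as the paper's. However, the central quantitative claim of your plan is false: the backward light cone $B(q)$ is not bounded by $4$ in the teleportation-qubit cases, and in particular it is not true that ``the qubits of $L_1$ touching the extended cone receive only one-qubit gates.'' The first halves of the teleportations in layer~1 are two-qubit Bell-pair creations (an $H$ together with a $\Lambda Z$), not bare Hadamards. When $q$ is a first teleportation qubit, layer~3 brings its Bell-measurement partner (a second teleportation qubit) into the cone, layer~2 brings in the other qubit of the $C_n$-gate acting on $q$, and in layer~1 each of these three qubits can be touched by a two-qubit gate: $q$ by the Bell creation of the preceding teleportation, the second teleportation qubit by the Bell creation with the next wire segment, and the layer-2 partner by its own Bell creation. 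A pure reachability count therefore gives up to six qubits besides $q'$ (seven in total), and similarly about seven when $q$ is a second teleportation qubit; also, none of the three layers is ``one-qubit only'' on the cone in these cases, so the claimed factor $2\cdot2\cdot1$ has no basis. Only the two cases with no layer-3 gate on $q$ (the postselection qubit of $C_n$, and $q_{\rm out}$ under the $\Lambda X$) are settled by the light cone alone, via $2^2+1=5$, which is exactly how the paper handles them.

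What is missing is the cancellation mechanism on which the paper's proof rests: one must track $L_1^\dag L_2^\dag L_3^\dag g L_3 L_2 L_1$ and use the fact that gates acting on qubits \emph{already inside} the support can still cancel against their adjoints, because the conjugated operator acts diagonally in the $Z$-basis on those qubits, so $\Lambda Z$ and $R(\pm2\pi/2^k)$ gates commute through while only $H$ does not. For instance, in the first-teleportation-qubit case the layer-1 $\Lambda Z$ gates touching the second teleportation qubit and the layer-2 partner cancel even though these qubits lie in the support, so layer~1 adds only the single Bell partner of $q$ itself; and in the second-teleportation-qubit case one needs the dichotomy that the layer-2 gate on the Bell-measurement partner is either an $H$ (adding no qubit but letting one layer-1 $\Lambda Z$ survive) or a diagonal gate (in which case both it and the corresponding layer-1 $\Lambda Z$ cancel), so that either way the total stays at five. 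Without this commutation/cancellation analysis, a support estimate based only on which qubits are reachable cannot get below roughly seven qubits in the teleportation-qubit cases, so the proposal as written does not prove the five-qubit bound.
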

\begin{proof}
 We first consider the case when $g=g_j$ in the commuting OR reduction
 circuit. We divide this case into the following three cases, where we
 assume that $g$ is applied on a postselection qubit $q_1$ and an output
 qubit $q_2$ of $E_n$:
\begin{itemize}
 \item Case 1: $q_1$ is the first teleportation qubit (of a
       teleportation circuit).

 \item Case 2: $q_1$ is the second teleportation qubit (of a
       teleportation circuit).

 \item Case 3: $q_1$ is the postselection qubit corresponding to the one
       of $C_n$.
\end{itemize}
We obtain the desired circuit on at most five qubits by simplifying
 $A^\dag_ngA_n$, where we represent $A_n$ as $L_3L_2L_1$, each of which
 is a layer of $A_n$. We consider Case 1 using an example of
 $A^\dag_ngA_n$ depicted in Fig.~\ref{figure4}(a), where $A_n$ is the
 circuit in Fig.~\ref{figure3}(c), $g$ is a controlled-$R(2\pi/2^k)$
 gate sandwiched between $H$ gates, and $q_1$ is the fourth qubit of
 $A_n$ from the top, which is the first teleportation qubit. By
 simplifying $L_3^\dag gL_3$, we obtain the circuit depicted in
 Fig~\ref{figure4}(b). We can further simplify the circuit and obtain
 the desired circuit on five qubits $q_1,\ldots,q_5$ depicted in
 Fig.~\ref{figure4}(c). In general, we can similarly simplify
 $A^\dag_ngA_n$ and a similar analysis works for Cases 2 and 3 and the
 case when $g=\Lambda X$. The details can be found in Appendix~A.2.
\end{proof}

\begin{figure}[t]
\centering
\includegraphics[scale=.6]{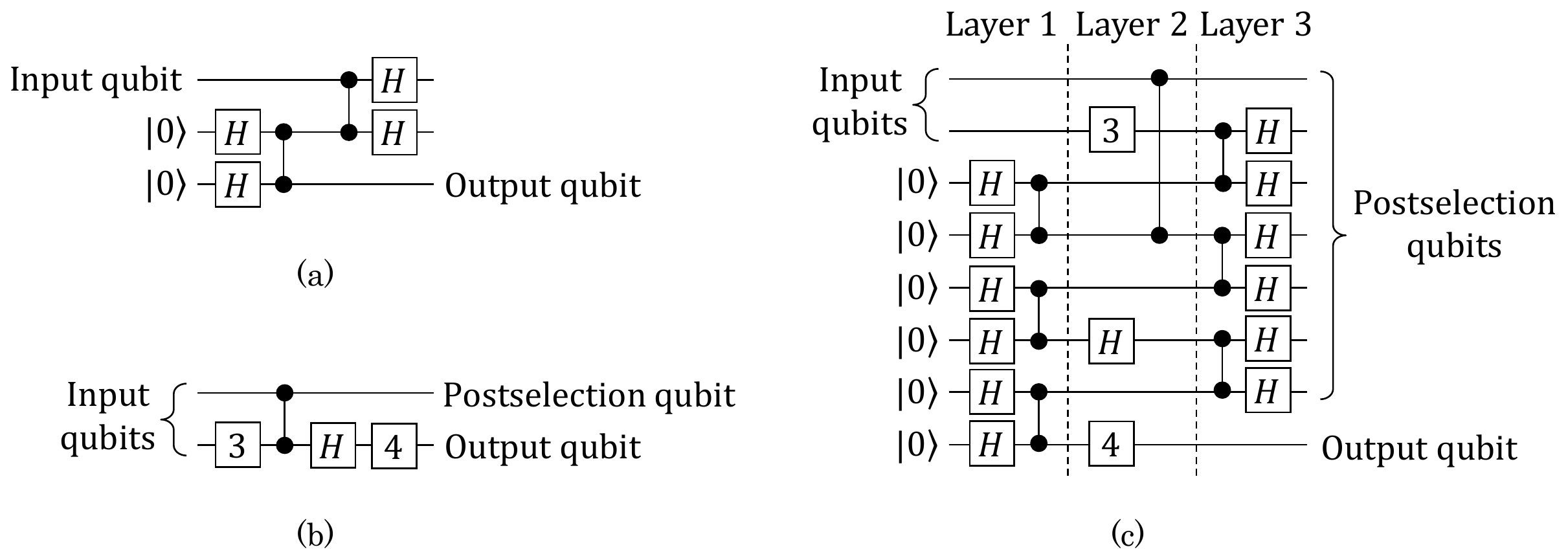}
\caption{(a) The teleportation circuit. (b) An example of circuit $C_n$,
 where $n=2$ and $a=0$. The gate represented by $k\in {\mathbb N}$ is an
 $R(2\pi/2^k)$ gate. (c) Depth-3 circuit $A_n$ constructed from $C_n$ in
 (b) by the method in~\cite{Fenner}, where $b=6$ and thus the total
 number of postselection qubits is seven.}
\label{figure3}
\end{figure}

\begin{figure}[t]
\centering
\includegraphics[scale=.6]{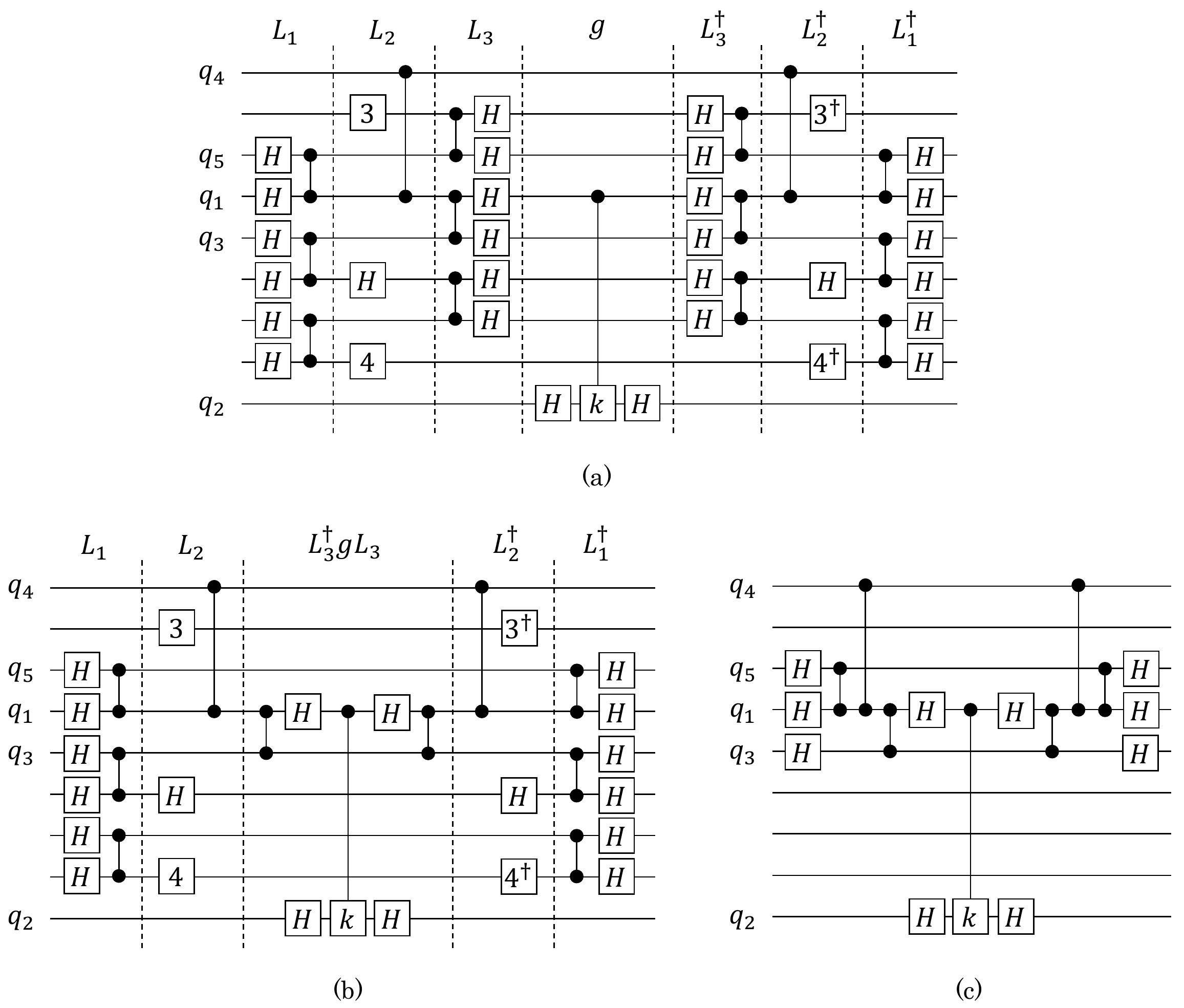}
\caption{(a) Gate $A^\dag_ngA_n$, where $A_n$ is the circuit in
 Fig.~\ref{figure3}(c), $g$ is a controlled-$R(2\pi/2^k)$ gate
 sandwiched between $H$ gates, and $q_1$ is the fourth qubit of $A_n$
 from the top. (b) The circuit obtained from $A^\dag_ngA_n$ in (a) by
 simplifying $L_3^\dag gL_3$. (c) The circuit on five qubits obtained
 from (b).}
\label{figure4}
\end{figure}

\subsection{Weak Simulatability of a Generalized Version}

The non-commuting OR reduction circuit with $b+1$ input qubits can be
represented as $H^{\otimes m}D'H^{\otimes m}$, where $m=\lceil
\log(b+2)\rceil$ and $D'$ is a quantum circuit consisting only of
controlled-$R(2\pi/2^k)$ gates. Since $\Lambda X$ is $H\Lambda ZH$, we
can represent the circuit in Theorem~\ref{commuting} as
$(A^\dag_n\otimes H^{\otimes (m+1)})D''(A_n\otimes H^{\otimes (m+1)})$,
where $D''$ consists of $D'$ and $\Lambda Z$, and $A_n$ is a depth-3
quantum circuit with $n$ input qubits, $a+b$ ancillary qubits, and $b+2$
output qubits. The output qubits of the whole circuit are the ancillary
qubits on which $H^{\otimes (m+1)}$ is applied.

We generalize the circuit in Theorem~\ref{commuting}. We assume that we
are given two quantum circuits: $F_n$ is a quantum circuit with $n$
input qubits, $s=O({\rm poly}(n))$ ancillary qubits, and $t\ (\leq n+s)$
output qubits and $D$ is a quantum circuit on $t+l$ qubits consisting of
pairwise commuting gates that are diagonal in the $Z$-basis and act on a
constant number of qubits, where $l=O(\log n)$. We consider the
following quantum circuit, which can be represented as $(F^\dag_n\otimes
H^{\otimes l})D(F_n\otimes H^{\otimes l})$, with $n$ input qubits, $s+l$
ancillary qubits, and $l$ output qubits:
\begin{enumerate}
 \item Apply $F_n$ on $n$ input qubits and $s$ ancillary qubits, where
       the input qubits of the whole circuit are those of $F_n$.

 \item Apply $H^{\otimes l}$ on $l$ ancillary qubits (other than the
       ancillary qubits in Step~1).

 \item Apply $D$ on $t+l$ qubits, which are the output qubits of $F_n$
       and the ancillary qubits in Step~2.

 \item Apply $H^{\otimes l}$ as in Step~2 and then apply $F^\dag_n$ as
       in Step~1.
\end{enumerate}
The output qubits are the ancillary qubits on which $H^{\otimes l}$ is
applied. The circuit in Theorem~\ref{commuting} corresponds to the case
when $F_n=A_n$, $D=D''$, $s=a+b$, $t=b+2$, and $l=m+1$.

When $F_n=H^{\otimes (n+s)}$ with arbitrary $s$ and $t$,
$(F^\dag_n\otimes H^{\otimes l})D(F_n\otimes H^{\otimes l})$ is weakly
simulatable~\cite{Bremner}. A simple generalization of the proof
in~\cite{Bremner} implies Theorem~\ref{sois}. In fact, we fix the state
of the qubits other than the $O(\log n)$ output qubits on the basis of
the assumption in Theorem~\ref{sois} and then follow the change of the
states of the output qubits. The details of the proof can be found in
Appendix~A.3. As described in Section~1, Theorem~\ref{sois} implies an
interesting suggestion on how to improve
Theorem~\ref{commuting}. Concretely speaking, a possible way to
construct a 3- or 4-local commuting quantum circuit that is not weakly
simulatable would be to somehow choose a depth-2 quantum circuit as
$F_n$, but such a construction is impossible.

\section{Clifford Circuits}

As an application of the construction method for the circuit in
Theorem~\ref{commuting}, we consider Clifford circuits with $n$ input
qubits, $O({\rm poly}(n))$ ancillary qubits, and $O(\log n)$ output
qubits. In this section, the ancillary qubits are allowed to be in a
general product state (not restricted to a tensor product of
$|0\rangle$). As shown in~\cite{Clark,Jozsa}, such a Clifford circuit
with only one output qubit is strongly simulatable. We first show that a
simple extension of the proof in~\cite{Clark,Jozsa} implies the strong
simulatability of a Clifford circuit with $O(\log n)$ output qubits:
\begin{lemma}\label{Clifford-strong}
Any Clifford circuit with $O({\rm poly}(n))$ ancillary qubits in a
 general product state and with $O(\log n)$ output qubits is strongly
 simulatable.
\end{lemma}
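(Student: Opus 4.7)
The plan is to extend the Clark--Jozsa argument of~\cite{Clark,Jozsa} from one output qubit to $m=O(\log n)$ output qubits by observing that the projector onto any basis state of the output register expands into only $2^m=O(\mathrm{poly}(n))$ Pauli summands, each of which can be evaluated in polynomial time via Heisenberg-picture propagation through the Clifford circuit. Concretely, let $C_n$ be the Clifford circuit and $|\psi\rangle=|x\rangle\otimes\bigotimes_i|\phi_i\rangle$ the product state of input and ancillary qubits. For $y\in\{0,1\}^m$ the outcome projector on the output qubits expands as
$$
P_y=\bigotimes_{j=1}^{m}\frac{I+(-1)^{y_j}Z_j}{2}=\frac{1}{2^m}\sum_{S\subseteq[m]}(-1)^{\sum_{j\in S}y_j}Z_S,\qquad Z_S=\prod_{j\in S}Z_j,
$$
so that $\Pr[C_n(x)=y]=\langle\psi|C_n^\dagger P_y C_n|\psi\rangle$ becomes a sum of at most $2^m=O(\mathrm{poly}(n))$ expectation values of the form $\langle\psi|C_n^\dagger Z_S C_n|\psi\rangle$.

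For each $S$, since $C_n$ is Clifford and $Z_S$ is a tensor product of Paulis, $C_n^\dagger Z_S C_n$ equals, up to an overall sign, a tensor-product Pauli operator $\bigotimes_k \sigma_k^{(S)}$ with $\sigma_k^{(S)}\in\{I,X,Y,Z\}$. This Pauli together with its sign is computable in classical polynomial time by the standard stabilizer-tableau update, propagating $Z_S$ backward through the $H$, $P$, and $\Lambda Z$ gates of $C_n$. Its expectation value on the product state factorizes as $\prod_k\langle\psi_k|\sigma_k^{(S)}|\psi_k\rangle$, and each one-qubit expectation is computable in constant time from the classical description of $|\psi_k\rangle$. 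Summing the $O(\mathrm{poly}(n))$ contributions and dividing by $2^m$ yields $\Pr[C_n(x)=y]$. For a marginal probability on any $m'<m$ chosen output qubits the identical argument applies with the smaller projector, producing at most $2^{m'}<2^m$ Pauli terms.

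The main (and only) obstacle is arithmetic precision: a general product ancilla state has arbitrary complex amplitudes, so the one-qubit expectation values may be irrational and the sums above cannot be computed exactly. However, for the additive error $1/2^{p(n)}$ required by Definition~\ref{sim}, it suffices to carry $O(\mathrm{poly}(n))$ bits of precision throughout the polynomially many arithmetic operations, which remains a polynomial-time computation. This establishes the claimed strong simulatability.
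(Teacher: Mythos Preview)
Your proof is correct and follows essentially the same route as the paper: expand the outcome projector on the $m=O(\log n)$ output qubits into $2^m=O(\mathrm{poly}(n))$ Pauli terms, propagate each through the Clifford circuit in the Heisenberg picture to obtain a signed tensor-product Pauli, and evaluate its expectation on the product input state by factorization. The paper writes the projector as $X_y|0\rangle^{\otimes l}\langle 0|^{\otimes l}X_y$ before expanding, which amounts to the same identity you use, and it does not spell out the precision discussion you add at the end.
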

\noindent
The proof can be found in Appendix~A.4.

In contrast to Lemma~\ref{Clifford-strong}, it is known that there
exists a Clifford circuit with $n$ input qubits, $O({\rm poly}(n))$
ancillary qubits in a particular product state, and $O({\rm poly}(n))$
output qubits such that it is not weakly simulatable with respect to
multiplicative error unless $\ph$ collapses to the third
level~\cite{Jozsa}. This is shown by using the fact that any $\postbqp$
circuit can be simulated (in some sense) by a Clifford circuit. More
precisely, let $L \in \postbqp$ and $C_n$ be a polynomial-size quantum
circuit with $n$ input qubits, $a=O({\rm poly}(n))$ ancillary qubits
initialized to $|0\rangle$, one output qubit, and one postselection
qubit (and some polynomial $q$) such that, for any $x \in \{0,1\}^n$,
\begin{itemize}
 \item ${\rm Pr}[{\rm post}_n(x)=0] \geq 1/2^{q(n)}$,

 \item if $x \in L$, ${\rm Pr}[C_n(x)=1|{\rm post}_n(x)=0] \geq 2/3$,

 \item if $x \notin L$, ${\rm Pr}[C_n(x)=1|{\rm post}_n(x)=0] \leq
       1/3$.
\end{itemize}
Then, there exists a Clifford circuit $A_n$ with $n$ input qubits, $a$
ancillary qubits initialized to $|0\rangle$, $b=O({\rm poly}(n))$
ancillary qubits in a product state $|\varphi\rangle^{\otimes b}$, and
one output qubit, where $|\varphi\rangle=R(\pi/4)H|0\rangle=(|0\rangle +
e^{i\pi/4}|1\rangle)/\sqrt{2}$, such that, for any $x \in \{0,1\}^n$,
\begin{itemize}
 \item if $x \in L$, Pr$[A_n(x)=1|{\rm qpost}_n(x)=0^{b+1}] \geq 2/3$,

 \item if $x \notin L$, Pr$[A_n(x)=1|{\rm qpost}_n(x)=0^{b+1}] \leq
       1/3$,
\end{itemize}
where the event ``${\rm qpost}_n(x)=0^{b+1}$'' means that all classical
 outcomes of $Z$-measurements on the qubit corresponding to the
 postselection qubit of $C_n$ and particular $b$ qubits (other than the
 output qubit) are 0. We call these $b+1$ qubits the postselection
 qubits of $A_n$. We can show that Pr$[{\rm qpost}_n(x)=0^{b+1}] \geq
 1/2^{b+q}$. By using this property and $A_n$ obtained from $L\in
 \postbqp \setminus \postbpp$ as in the proof of Lemma~\ref{depth-3}, we
 can show the following lemma, where the classical simulatability is
 defined with respect to additive error:
\begin{lemma}
There exists a Clifford circuit with $O({\rm poly}(n))$ ancillary qubits
 in a particular product state and with $O({\rm poly}(n))$ output qubits
 such that it is not weakly simulatable unless $\ph$ collapses to the
 third level.
\end{lemma}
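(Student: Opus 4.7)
The plan is to follow the exact skeleton of the proof of Lemma~\ref{depth-3}, but with the Clifford-plus-magic-state circuit $A_n$ from~\cite{Jozsa} (as stated in the paragraph preceding the lemma) playing the role of the depth-3 circuit from~\cite{Fenner}. Assume, for contradiction, that $\ph$ does not collapse to the third level, so that $\postbqp \nsubseteq \postbpp$. Pick $L \in \postbqp \setminus \postbpp$ and let $C_n$ be the associated $\postbqp$ circuit with the stated properties. The paper already supplies the corresponding Clifford circuit $A_n$ with $n$ input qubits, $a$ ancillary qubits initialized to $|0\rangle$, $b=O({\rm poly}(n))$ ancillary qubits in the product state $|\varphi\rangle^{\otimes b}$, and one original output qubit, whose conditional acceptance probabilities reproduce those of $C_n$ given that $Z$-measurements on the $b+1$ designated postselection qubits all return $0$.

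Next, I reinterpret $A_n$ as a circuit with $b+2$ output qubits by promoting the $b+1$ postselection qubits to output qubits alongside the original output qubit $q_{\rm out}$. This is the same relabeling trick used in the proof of Lemma~\ref{depth-3}. Writing $y=0^{b+1}c$ for the joint outcome where the first $b+1$ bits come from the postselection qubits and the last bit $c$ is that of $q_{\rm out}$, we have ${\rm Pr}[A_n(x)=0^{b+1}c] = {\rm Pr}[A_n(x)=c\ \&\ {\rm qpost}_n(x)=0^{b+1}]$. Combined with the lower bound ${\rm Pr}[{\rm qpost}_n(x)=0^{b+1}]\ge 1/2^{b+q(n)}$ stated in the excerpt, this yields the two-sided estimates
\[
{\rm Pr}[A_n(x)=0^{b+1}1] \ge \tfrac{2}{3}\cdot{\rm Pr}[{\rm qpost}_n(x)=0^{b+1}] \quad \text{if } x\in L,
\]
\[
{\rm Pr}[A_n(x)=0^{b+1}1] \le \tfrac{1}{3}\cdot{\rm Pr}[{\rm qpost}_n(x)=0^{b+1}] \quad \text{if } x\notin L,
\]
exactly mirroring the situation at the end of the proof of Lemma~\ref{depth-3}. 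Note that $A_n$ is a Clifford circuit with $O({\rm poly}(n))$ ancillary qubits in the specific product state $|0\rangle^{\otimes a}\otimes |\varphi\rangle^{\otimes b}$ and with $O({\rm poly}(n))$ output qubits, so the circuit is of the form whose non-simulatability we want to establish.

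Finally, I invoke the implication ``weak simulatability of $A_n$ implies $L\in\postbpp$'' from Appendix~A.1, whose argument is purely about the output-probability inequalities above together with the inverse-exponential lower bound on ${\rm Pr}[{\rm qpost}_n(x)=0^{b+1}]$ and the exponentially small additive simulation error. That argument is completely insensitive to whether the underlying quantum circuit is depth-$3$ or Clifford-with-magic-states; it uses only the reinterpretation of $A_n$ as having $b+2$ output qubits and the sampling-based approximation guaranteed by weak simulatability. Therefore weak simulatability of the Clifford $A_n$ would place $L$ in $\postbpp$, contradicting $L\notin\postbpp$ and hence contradicting the assumption that $\ph$ does not collapse to the third level. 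I do not anticipate a genuine obstacle here: the only thing to check carefully is that Appendix~A.1's reduction really does not use any property of $A_n$ beyond what I have listed, which is immediate from its statement.
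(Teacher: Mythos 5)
Your proposal is correct and is essentially the paper's own argument: the paper likewise takes the Clifford circuit $A_n$ with magic-state ancillas from~\cite{Jozsa}, promotes the $b+1$ postselection qubits to output qubits, uses ${\rm Pr}[{\rm qpost}_n(x)=0^{b+1}]\geq 1/2^{b+q}$, and reuses the Appendix~A.1 reduction verbatim, which indeed depends only on the output-probability inequalities and not on the structure of $A_n$. The only cosmetic issue is the phrase ``assume, for contradiction, that $\ph$ does not collapse'': the non-collapse is the standing hypothesis, and the contradiction is derived from the assumed weak simulatability of $A_n$.
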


As in the proof of Lemma~\ref{depth-3+OR}, we construct a quantum
circuit $E'_n$ with $n$ input qubits and $a+b+m+1$ ancillary qubits by
combining $A_n$ with the non-commuting OR reduction circuit as
follows, where $m=\lceil\log (b+2)\rceil$ and the $m+1$ ancillary qubits
are the output qubits of $E'_n$. As an example, $E'_n$ is depicted in
Fig.~\ref{figure5}(a), where $n=5$, $a=0$, and $b=2$.
\begin{enumerate}
 \item Apply $A_n$ on $n$ input qubits, $a$ ancillary qubits initialized
       to $|0\rangle$, and $b$ ancillary qubits initialized to
       $|\varphi\rangle$, where the input qubits of $E'_n$ are those of
       $A_n$.

 \item Apply a $\Lambda X$ gate on the (original) output qubit of $A_n$
       and an ancillary qubit (other than the ancillary qubits in
       Step~1), where the output qubit is the control qubit.

 \item Apply the non-commuting OR reduction circuit on the $b+1$
       postselection qubits of $A_n$ and $m$ ancillary qubits (other
       than the ancillary qubits in Steps~1 and 2), where the
       postselection qubits are the input qubits of the OR reduction
       circuit.
\end{enumerate}
A direct application of the proof of Lemma~\ref{depth-3+OR} implies the
following lemma:
\begin{lemma}\label{Clifford+OR}
There exists a Clifford circuit combined with the OR reduction circuit as
 described above with $O({\rm poly}(n))$ ancillary qubits in a
 particular product state and with $O(\log n)$ output qubits such that
 it is not weakly simulatable unless $\ph$ collapses to the third level.
\end{lemma}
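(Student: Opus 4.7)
The plan is to follow the proof of Lemma~\ref{depth-3+OR} essentially verbatim, with the Clifford circuit $A_n$ (together with its $|\varphi\rangle$ magic-state ancillas) playing the role that the depth-3 circuit $A_n$ played there. Treating the $b+1$ postselection qubits of $A_n$ together with its original output qubit as $b+2$ output qubits yields, by the preceding lemma, a circuit that is not weakly simulatable with respect to additive error under the standard $\ph$-assumption; the construction of $E'_n$ described before the statement then produces an $(m+1)$-output-qubit circuit, and the task is to show that it inherits this hardness.

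The technical core is the pair of identities
\begin{align*}
{\rm Pr}[E'_n(x)=0^m 1] &= {\rm Pr}[A_n(x)=0^{b+1}1],\\
{\rm Pr}[E'_n(x)=0^m 0] &= {\rm Pr}[A_n(x)=0^{b+1}0].
\end{align*}
To establish them, I would expand the state produced by Step~1 in the computational basis on the $b+2$ designated output qubits of $A_n$, observe that the $\Lambda X$ gate of Step~2 copies the value of the original output qubit of $A_n$ onto an ancillary qubit initially in $|0\rangle$, and then invoke the defining property of the non-commuting OR reduction circuit: on a computational-basis input $|y_1\rangle\cdots|y_{b+1}\rangle$ its $m$ ancillary output qubits end in $|0\rangle^{\otimes m}$ if and only if every $y_j=0$. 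Summing over branches yields the identities. With these in hand, combined with the bound ${\rm Pr}[{\rm qpost}_n(x)=0^{b+1}]\geq 1/2^{b+q(n)}$ recalled before the statement, the concluding argument of Lemma~\ref{depth-3} transfers verbatim: running a polynomial-size randomized classical sampler for $E'_n$'s output distribution, postselecting on the first $m$ output bits being $0^m$, and returning the last bit would place $L$ in $\postbpp$, contradicting $L\in\postbqp\setminus\postbpp$.

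The only subtlety will be checking that omitting the closing $A_n^\dagger$ layer, which appeared as Step~4 of $E_n$ in Lemma~\ref{depth-3+OR}, causes no difficulty. There, that layer was introduced solely to enable the commuting-gate rewriting $g\mapsto A_n^\dagger g A_n$ and did not enter the probability identities; here we are only after additive-error hardness of $E'_n$ itself, the eventual conversion into a Clifford circuit augmented by a depth-1 non-Clifford layer being deferred to Theorem~\ref{Clifford-1}. Hence its omission is harmless, and the argument reduces to a routine adaptation of the earlier proof.
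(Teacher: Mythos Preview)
Your proposal is correct and matches the paper's own proof, which consists of the single sentence ``A direct application of the proof of Lemma~\ref{depth-3+OR} implies the following lemma.'' You have simply unpacked what that direct application entails, including the probability identities and the observation that the omitted $A_n^\dagger$ layer is irrelevant to the output distribution; nothing more is needed.
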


We replace the non-commuting OR reduction circuit in Step~3 with a
constant-depth OR reduction circuit with unbounded fan-out
gates~\cite{Hoyer}, where an unbounded fan-out gate can be considered as
a sequence of CNOT gates with the same control qubit. It is easy to show
that decomposing the unbounded fan-out gates into CNOT gates in the
constant-depth OR reduction circuit yields a Clifford-1 circuit, which
is a Clifford circuit augmented by a depth-1 non-Clifford layer. In
particular, this procedure transforms the middle part of the
non-commuting OR reduction circuit in Step~3, which is the only part
that includes non-Clifford gates, into a quantum circuit that has CNOT
gates and a depth-1 layer consisting of all gates in the middle
part. The circuit obtained from the middle part in Fig.~\ref{figure5}(a)
is depicted in Fig.~\ref{figure5}(b). This transformation with
Lemma~\ref{Clifford+OR} implies Theorem~\ref{Clifford-1}.

A similar argument implies that there exists a Clifford-2 circuit with
$O({\rm poly}(n))$ ancillary qubits in a particular product state and
with only one output qubit such that it is not weakly simulatable unless
$\ph$ collapses to the third level, where a Clifford-2 circuit has two
depth-1 non-Clifford layers. Let $L\in \postbqp \setminus \postbpp$. We
obtain $A_n$ as described above and combine it with a constant-depth
quantum circuit for the OR function with unbounded fan-out
gates~\cite{Takahashi}. By decomposing the unbounded fan-out gates into
CNOT gates, the OR circuit can be transformed into a Clifford-2
circuit. Unfortunately, a combination of the circuits similar to the
above construction has two output qubits. Thus, we construct two
circuits with one output qubit. One circuit consists of $A_n$ and the OR
circuit, where the input qubits of the OR circuit are the output qubit
of $A_n$ and $b+1$ postselection qubits, and the output qubit of the OR
circuit is the output qubit of the whole circuit. The other similarly
consists of $XA_n$ and the OR circuit, where $X$ is applied on the
output qubit of $A_n$. By a similar argument in~\cite{Takahashi-sim}, we
can show that, if these two Clifford-2 circuits are weakly simulatable,
then $L\in\postbpp$. Thus, at least one of the circuits is not weakly
simulatable.

\begin{figure}[t]
\centering
\includegraphics[scale=.6]{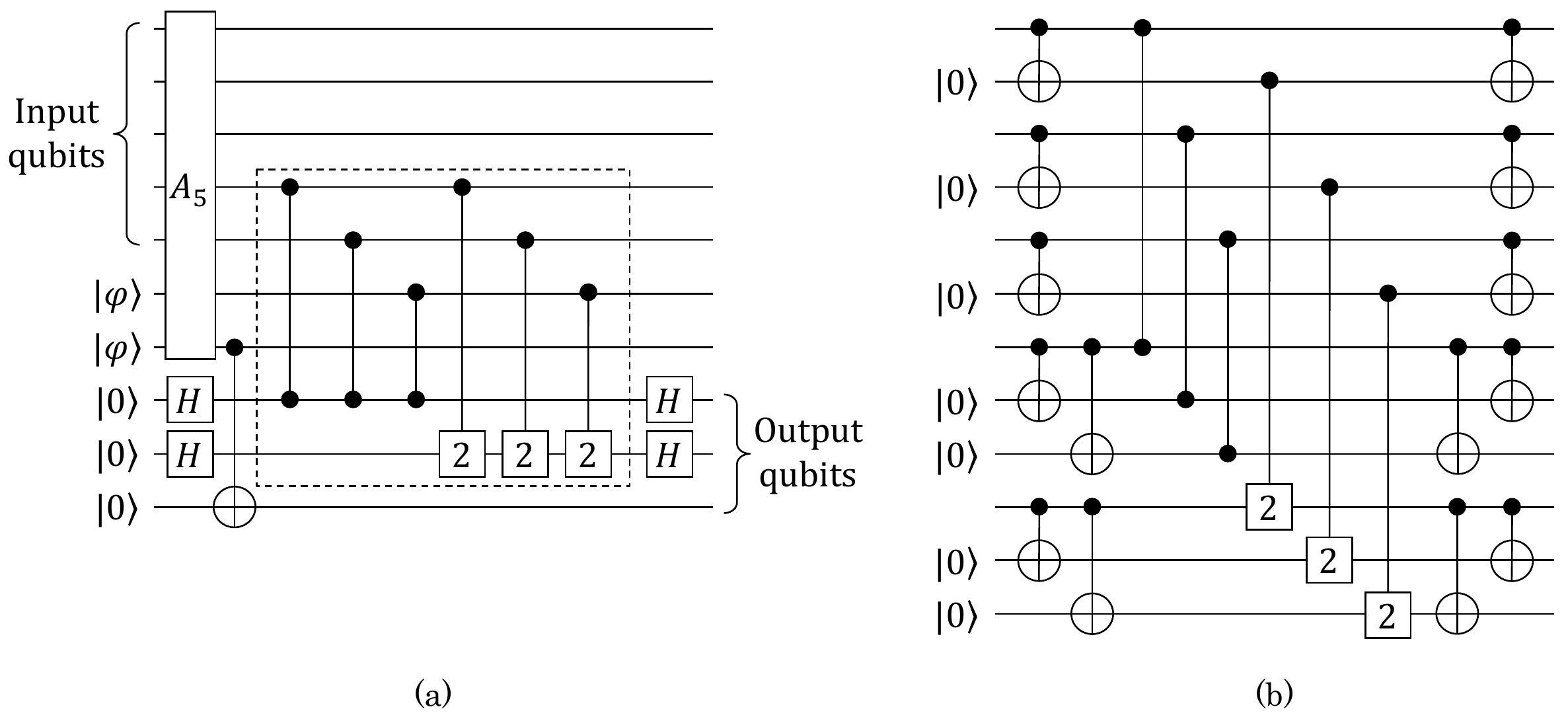}
\caption{(a) Circuit $E'_n$, where $n=5$, $a=0$, and $b=2$ (and thus
 $m=2$). The dashed box represents the middle part of the
 non-commuting OR reduction circuit. (b) The circuit that has CNOT
 gates and a depth-1 layer consisting of all gates in the middle part in
 (a). The qubits in state $|0\rangle$ are new ancillary qubits, which
 are not depicted in (a).}
\label{figure5}
\end{figure}

\section{Open Problems}

Interesting challenges would be to further investigate commuting quantum
circuits and to consider closely related computational models. Some
examples are as follows:
\begin{itemize}
 \item Does there exist a 3- or 4-local commuting quantum circuit with
       $O(\log n)$ output qubits such that it is not weakly simulatable
       (under a plausible assumption)?

 \item Do the theorems in this paper hold when exponentially small
       error $1/2^{p(n)}$ is replaced with polynomially small error
       $1/p(n)$ in the definitions of the classical simulatability?

 \item Can we apply the results on commuting quantum circuits to
       investigating the computational power of constant-depth quantum
       circuits?
\end{itemize}

\section*{Acknowledgment}

We thank Harumichi Nishimura and Tomoyuki Morimae for pointing out to us
the applicability of their proof method~\cite{Morimae}, which inspired
us to realize that a slight modification of our proof method in the
previous version of the present paper yields the stronger results
described in this version.



\bibliography{mybib}




\appendix
\section{Proofs}

\subsection{Proof of Lemma~\ref{depth-3}}

We assume that $A_n$ is weakly simulatable. Then, there exists a
 polynomial-size randomized classical circuit $R_n$ such that, for any
 $x\in\{0,1\}^n$ and $y\in\{0,1\}^{b+2}$,
$$|{\rm Pr}[R_n(x)=y]-{\rm Pr}[A_n(x)=y]| \leq \frac{1}{2^{b+q+10}}.$$
This implies that
$${\rm Pr}[A_n(x)=0^{b+1}1]-\frac{1}{2^{b+q+10}} \leq {\rm
 Pr}[R_n(x)=0^{b+1}1] \leq {\rm
 Pr}[A_n(x)=0^{b+1}1]+\frac{1}{2^{b+q+10}},$$
$${\rm Pr}[A_n(x)=0^{b+1}0]-\frac{1}{2^{b+q+10}} \leq {\rm
 Pr}[R_n(x)=0^{b+1}0] \leq {\rm
 Pr}[A_n(x)=0^{b+1}0]+\frac{1}{2^{b+q+10}}.$$
Since ${\rm Pr}[A_n(x)=0^{b+1}1]+{\rm Pr}[A_n(x)=0^{b+1}0]= {\rm
 Pr}[{\rm qpost}_n(x)=0^{b+1}]$, it holds that
\begin{align*}
 {\rm Pr}[{\rm qpost}_n(x)=0^{b+1}]-\frac{1}{2^{b+q+9}} 
&\leq {\rm Pr}[R_n(x)=0^{b+1}1] + {\rm Pr}[R_n(x)=0^{b+1}0]\\
& \leq {\rm Pr}[{\rm qpost}_n(x)=0^{b+1}] + \frac{1}{2^{b+q+9}}.
\end{align*}

We construct a polynomial-size randomized classical circuit $S_n$ that
 implements the following classical algorithm with input $x \in
 \{0,1\}^n$:
\begin{enumerate}
 \item Compute $R_n(x)$.

 \item \begin{enumerate}
	\item If $R_n(x)=0^{b+1}1$, set ${\rm post}_n(x)=0$ and
	      $S_n(x)=1$.
	\item If $R_n(x)=0^{b+1}0$, set ${\rm post}_n(x)=0$ and
	      $S_n(x)=0$.
	\item Otherwise, set ${\rm post}_n(x)=1$ and $S_n(x)=1$.
	\end{enumerate}
\end{enumerate}
By the definition of $S_n$,
\begin{align*}
{\rm Pr}[{\rm post}_n(x)=0] &= {\rm Pr}[R_n(x)=0^{b+1}1] + {\rm
 Pr}[R_n(x)=0^{b+1}0]\\
&\geq {\rm Pr}[{\rm qpost}_n(x)=0^{b+1}]-\frac{1}{2^{b+q+9}}\\
&\geq \frac{1}{2^{b+q}}-\frac{1}{2^{b+q+9}}>0.
\end{align*}
Moreover, for any $x \in \{0,1\}^n$,
$${\rm Pr}[S_n(x)=1|{\rm post}_n(x)=0] = \frac{{\rm Pr}[R_n(x)=0^{b+1}1]}
 {{\rm Pr}[R_n(x)=0^{b+1}1] + {\rm Pr}[R_n(x)=0^{b+1}0]}.$$
If $x \in L$,
\begin{align*}
{\rm Pr}[S_n(x)=1|{\rm post}_n(x)=0] &\geq 
\frac{{\rm Pr}[A_n(x)=0^{b+1}1]-\frac{1}{2^{b+q+10}}}{ {\rm Pr}[{\rm
qpost}_n(x)=0^{b+1}]+\frac{1}{2^{b+q+9}}}\\
&\geq
\frac{\frac{2}{3}\cdot{\rm Pr}[{\rm
qpost}_n(x)=0^{b+1}]-\frac{1}{2^{b+q+10}}}{ {\rm Pr}[{\rm
qpost}_n(x)=0^{b+1}]+\frac{1}{2^{b+q+9}}}\\
&=\frac{2}{3} - \frac{7\varepsilon}{3(1+2\varepsilon)} 
> \frac{2}{3} - \frac{7}{3}\varepsilon > \frac{3}{5},
\end{align*}
where $\varepsilon = 1/(2^{b+q+10}\cdot {\rm Pr}[{\rm
qpost}_n(x)=0^{b+1}])$ and it holds that
$$0 < \varepsilon \leq \frac{1}{2^{b+q+10} \cdot \frac{1}{2^{b+q}}} =
\frac{1}{2^{10}}.$$
If $x \notin L$,
\begin{align*}
{\rm Pr}[S_n(x)=1|{\rm post}_n(x)=0] &\leq
\frac{{\rm Pr}[A_n(x)=0^{b+1}1]+\frac{1}{2^{b+q+10}}}{ {\rm Pr}[{\rm
qpost}_n(x)=0^{b+1}]-\frac{1}{2^{b+q+9}}}\\
&\leq
\frac{\frac{1}{3}\cdot{\rm Pr}[{\rm
qpost}_n(x)=0^{b+1}]+\frac{1}{2^{b+q+10}}}{{\rm Pr}[{\rm
qpost}_n(x)=0^{b+1}]-\frac{1}{2^{b+q+9}}}\\
&=\frac{1}{3} + \frac{5\varepsilon}{3(1-2\varepsilon)} < \frac{2}{5}.
\end{align*}
The constants 2/3 and 1/3 in the definition of $\postbpp$ can be
 replaced with $1/2 + \delta$ and $1/2 -\delta$, respectively, for any
 constant $0 < \delta < 1/2$~\cite{Bremner}. Thus, $L \in \postbpp$.

\subsection{Proof of Lemma~\ref{5-local}}

\begin{itemize}
 \item Case 1: $q_1$ is the first teleportation qubit (of a
       teleportation circuit).

 We note that $g$ is on the set of qubits $\{q_1,q_2\}$ and that there
       is no gate on $q_2$ in each layer. All $\Lambda Z$ gates other
       than the one on $q_1$ and qubit $q_3$ in layer~3 are cancelled
       out in $L_3^\dag gL_3$. Only the $\Lambda Z$ gate, which is not
       cancelled out, increases the number of qubits involved with
       $\{q_1,q_2\}$ by one. Thus, $L_3^\dag gL_3$ is on
       $\{q_1,q_2,q_3\}$. By the construction of the teleportation
       circuit, there is no gate on $q_3$ in layer~2. Only one $\Lambda
       Z$ gate on $q_1$ and qubit $q_4$ in layer~2 increases the number
       of qubits involved with $\{q_1,q_2,q_3\}$ by one. Thus,
       $L^\dag_2L_3^\dag gL_3L_2$ is on at most four qubits. If a
       $\Lambda Z$ gate is on $q_3$ or $q_4$ and on another qubit, it is
       cancelled out in $L^\dag_1L^\dag_2L_3^\dag gL_3L_2L_1$. Only one
       $\Lambda Z$ gate on $q_1$ and qubit $q_5$ in layer~1 increases
       the number of qubits involved with $\{q_1,q_2,q_3,q_4\}$ by
       one. Thus, $L^\dag_1L^\dag_2L_3^\dag gL_3L_2L_1$ is on at most
       five qubits.

\item Case~2: $q_1$ is the second teleportation qubit (of a
teleportation circuit).

As an example, $A^\dag_ngA_n$ is depicted in Fig.~\ref{figure6}(a),
      where $A_n$ is the circuit in Fig.~\ref{figure3}(c), $g$ is a
      controlled-$R(2\pi/2^k)$ gate sandwiched between $H$ gates, and
      $q_1$ is the second qubit of $A_n$ from the bottom, which is the
      second teleportation qubit. As in Case~1, there is no gate on
      $q_2$ in each layer and $L_3^\dag gL_3$ is on
      $\{q_1,q_2,q_3\}$. The circuit obtained from $A^\dag_ngA_n$ in
      Fig.~\ref{figure6}(a) by simplifying $L_3^\dag gL_3$ is depicted
      in Fig.~\ref{figure6}(b). By the construction of the teleportation
      circuit, there is no gate on $q_1$ in layer~2. If a $\Lambda Z$
      gate is on $q_3$ and a qubit in layer~2, it is cancelled out in
      $L^\dag_2L_3^\dag gL_3L_2$. Thus, gates in layer~2 do not increase
      the number of qubits involved with $\{q_1,q_2,q_3\}$. In layer~1,
      a $\Lambda Z$ gate on $q_1$ and qubit $q_4$ increases the number
      of qubits involved with $\{q_1,q_2,q_3\}$ by one, and so does a
      $\Lambda Z$ gate on $q_3$ and qubit $q_5$. In particular, the
      latter happens only when an $H$ gate is on $q_3$ in layer~2. This
      is because, when any other gate, i.e., a $\Lambda Z$ or $R(\pm
      2\pi/2^k)$ gate, is on $q_3$ in layer~2, the gate is cancelled out
      in $L^\dag_2L_3^\dag gL_3L_2$ and thus a $\Lambda Z$ gate on $q_3$
      and qubit $q_5$ is also cancelled out in $L^\dag_1L^\dag_2L_3^\dag
      gL_3L_2L_1$. Thus, $L^\dag_1L^\dag_2L_3^\dag gL_3L_2L_1$ is on at
      most five qubits. The circuit obtained from $A^\dag_ngA_n$ in
      Fig.~\ref{figure6}(b) is depicted in Fig.~\ref{figure6}(c).

 \item Case 3: $q_1$ is the postselection qubit corresponding to the one
       of $C_n$.

       Similar to the above cases, there is no gate on $q_2$ in each
       layer. By the construction of $A_n$, there is no gate on $q_1$ in
       layer~3. Thus, it suffices to consider only $L_2L_1$. Since $g$
       is on two qubits and the number of qubits on which both $g$ and
       $L_2L_1$ are applied is one, $L^\dag_1L^\dag_2 gL_2L_1$ is on at
       most $2^2+1=5$ qubits.
\end{itemize}
The analysis for Case 3 also works for the case when $g=\Lambda X$ in
Step~2 of $E_n$.

\begin{figure}[t]
\centering
\includegraphics[scale=.6]{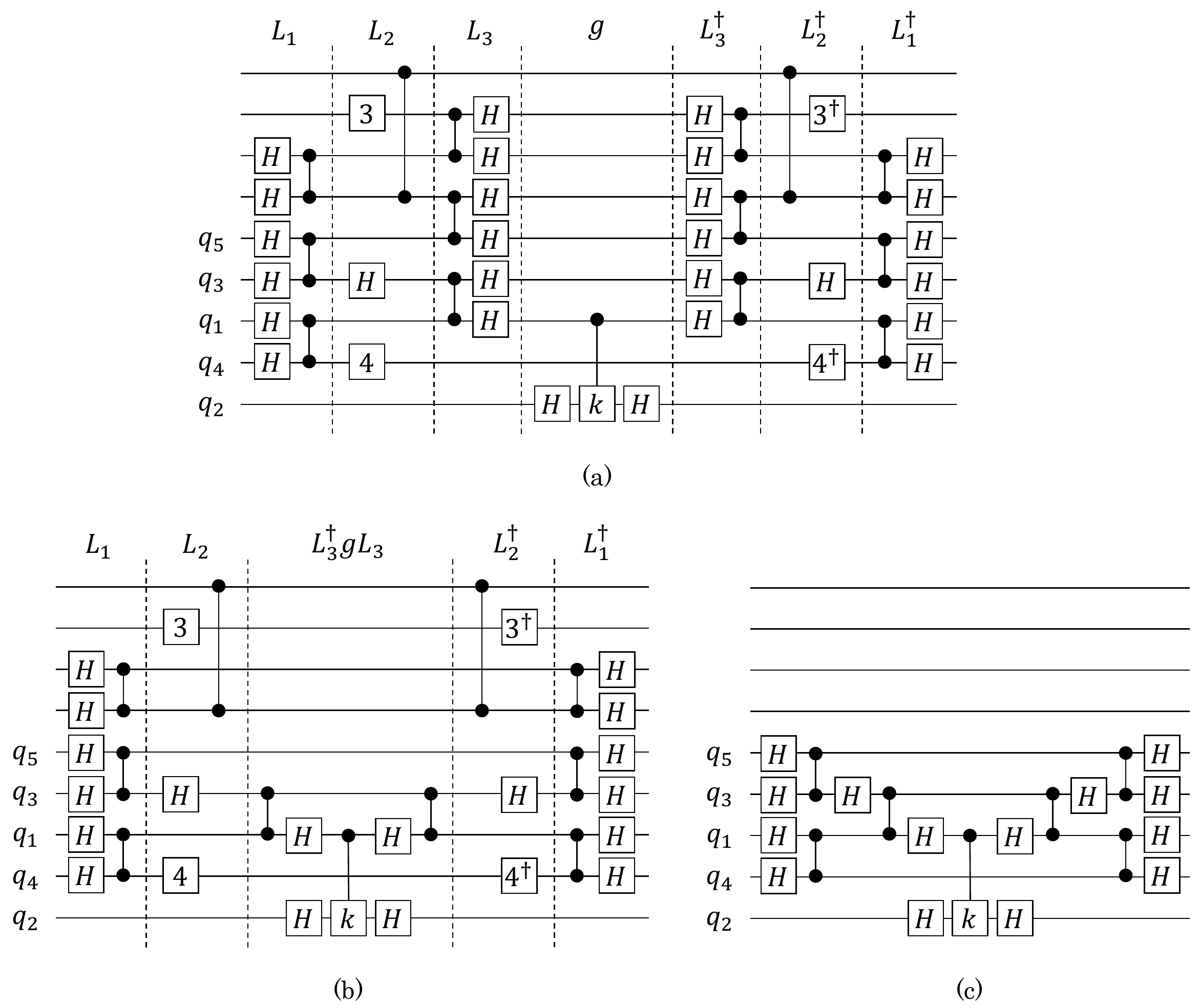}
\caption{(a) Gate $A^\dag_ngA_n$, where $A_n$ is the circuit in
 Fig.~\ref{figure3}(c), $g$ is a controlled-$R(2\pi/2^k)$ gate
 sandwiched between $H$ gates, and $q_1$ is the second qubit of $A_n$
 from the bottom. (b) The circuit obtained from $A^\dag_ngA_n$ in (a) by
 simplifying $L_3^\dag gL_3$. (c) The circuit on five qubits obtained
 from (b).}
\label{figure6}
\end{figure}

\subsection{Proof of Theorem~\ref{sois}}

Let $|x\rangle$ be an $n$-qubit input state, where $x \in
 \{0,1\}^n$. Moreover, let 
$$F_n|x\rangle|0\rangle^{\otimes s}=\sum_{z\in
 \{0,1\}^{t}}\alpha_{x,z}|z\rangle|\psi_{x,z}\rangle,$$
where $\alpha_{x,z} \in {\mathbb C}$ and $|\psi_{x,z}\rangle$ is an
 $(n+s-t)$-qubit state. Then,
$$(F^\dag_n\otimes H^{\otimes l})D(F_n\otimes H^{\otimes
 l})|x\rangle|0\rangle^{\otimes (s+l)} =
 \frac{1}{\sqrt{2^l}}(F^\dag_n\otimes H^{\otimes l}) \sum_{z\in
 \{0,1\}^{t},w\in \{0,1\}^l}\alpha_{x,z}D|z\rangle |\psi_{x,z}\rangle
 |w\rangle.$$
Since $D$ consists only of gates that are diagonal in the $Z$-basis,
 $D|z\rangle|w\rangle=e^{if(z,w)}|z\rangle |w\rangle$ for some value
 $f(z,w)$ computed from the diagonal elements of $D$. Thus, the above
 state is
$$\frac{1}{\sqrt{2^l}}(F^\dag_n\otimes H^{\otimes l})
\sum_{z\in \{0,1\}^{t},w\in
 \{0,1\}^l}\alpha_{x,z}e^{if(z,w)}|z\rangle|\psi_{x,z}\rangle
 |w\rangle.$$
Thus, for any $y\in\{0,1\}^l$, the probability that $(F^\dag_n\otimes
 H^{\otimes l})D(F_n\otimes H^{\otimes l})$ outputs $y$, which is
 represented as
$${\rm Pr}[(F^\dag_n\otimes H^{\otimes l})D(F_n\otimes H^{\otimes
 l})(x)=y],$$
is computed as
\begin{align*}
&\frac{1}{2^l} \sum_{z,z'\in \{0,1\}^{t},w,w'\in \{0,1\}^l}
 \alpha^\dag_{x,z'}\alpha_{x,z}e^{-if(z',w')+if(z,w)} \langle
 z'|z\rangle \langle\psi_{x,z'}|\psi_{x,z}\rangle \langle w'|H^{\otimes
 l}|y\rangle\langle y| H^{\otimes l}|w\rangle\\
&= \sum_{z \in \{0,1\}^{t}}|\alpha_{x,z}|^2\cdot \frac{1}{2^l}
 \sum_{w,w'\in \{0,1\}^l} e^{-if(z,w')+if(z,w)}
 \langle w'| H^{\otimes l}
|y\rangle\langle y| H^{\otimes l}|w\rangle.
\end{align*}
Let $p(n)$ be an arbitrary polynomial. By the assumption, there exists a
 polynomial-size randomized classical circuit $R_n$ such that, for any
 $x\in\{0,1\}^n$ and $z\in\{0,1\}^{t}$,
$$|{\rm Pr}[R_n(x)=z]-{\rm Pr}[F_n(x)=z]|=|{\rm
 Pr}[R_n(x)=z]-|\alpha_{x,z}|^2| \leq \frac{1}{2^{p(n)+t}}.$$
We consider a polynomial-size randomized classical circuit $T_n$ that
 implements the following classical algorithm for generating the
 probability distribution
$$\{(y,{\rm Pr}[(F^\dag_n\otimes H^{\otimes l})D(F_n\otimes H^{\otimes
 l})(x)=y])\}_{y\in \{0,1\}^l},$$ 
where the input is $x \in \{0,1\}^n$:
\begin{enumerate}
 \item Compute $z_0=R_n(x)\in\{0,1\}^{t}$.

 \item Compute the probability that $Z$-measurements on the state
$$\frac{1}{\sqrt{2^l}} \sum_{w\in
       \{0,1\}^l}e^{if(z_0,w)}H^{\otimes l}|w\rangle$$
       output $y$ for any $y \in \{0,1\}^l$.
 \item Output $y\in\{0,1\}^l$ according to the probability
       distribution computed in Step~2.
\end{enumerate}

The probability in Step~2 is represented as
$$\frac{1}{2^l} \sum_{w,w'\in \{0,1\}^l}e^{-if(z_0,w')+if(z_0,w)}
 \langle w'| H^{\otimes l} |y\rangle\langle y| H^{\otimes l}|w\rangle.$$
We can compute $f(z_0,w)$ using a polynomial-size classical
circuit since $D$ has polynomially many gates $g$ and it is easy to
classically compute $\gamma_g\in {\mathbb C}$ such that
$g|z_0\rangle|w\rangle=\gamma_g|z_0\rangle|w\rangle$ by using the
classical description of $D$, which includes information about the
complex numbers defining $g$ and the qubit numbers on which $g$ is
applied. Moreover, since the state in Step~2 is only on $l=O(\log n)$
qubits, we can compute the probability in Step~2 up to an exponentially
small additive error using a polynomial-size classical circuit. In the
following, for simplicity, we assume that we can compute the probability
exactly. Then, for any $y\in\{0,1\}^l$,
$${\rm Pr}[T_n(x)=y] = \sum_{z_0\in\{0,1\}^{t}}{\rm Pr}[R_n(x)=z_0]\cdot
\frac{1}{2^l} \sum_{w,w'\in \{0,1\}^l}e^{-if(z_0,w')+if(z_0,w)} \langle
w'| H^{\otimes l} |y\rangle\langle y| H^{\otimes l}|w\rangle.$$
This implies that, for any $x\in\{0,1\}^n$ and $y\in\{0,1\}^l$,
\begin{align*}
 |{\rm Pr}[T_n(x)=y]-{\rm Pr}[(F^\dag_n\otimes H^{\otimes
 l})D(F_n\otimes H^{\otimes l})(x)=y]|
& \leq \sum_{z_0\in\{0,1\}^{t}}|{\rm
 Pr}[R_n(x)=z_0]-|\alpha_{x,z_0}|^2|\\
& \leq \frac{2^{t}}{2^{p(n)+t}}=\frac{1}{2^{p(n)}}.
\end{align*}
A similar argument works when we compute the probability in Step~2 up to 
an exponentially small additive error. Thus, $(F^\dag_n\otimes
H^{\otimes l})D(F_n\otimes H^{\otimes l})$ is weakly simulatable.

\subsection{Proof of Lemma~\ref{Clifford-strong}}

Let $C_n=G_N\cdots G_1$ be a Clifford circuit with $n$ input qubits,
 $a=O({\rm poly}(n))$ ancillary qubits, and $l=O(\log n)$ output qubits,
 where $N=O({\rm poly}(n))$ and $G_j$ is $H$, $P$, or $\Lambda Z$. For
 any $x=x_1\cdots x_n\in \{0,1\}^n$, let
 $|\psi_x\rangle=|x_1\rangle\cdots |x_n\rangle |\psi_1\rangle\cdots
 |\psi_a\rangle$ be an input state, where $|\psi_j\rangle$ is a
 one-qubit state. For any $y\in\{0,1\}^l$,
$${\rm Pr}[C_n(x)=y]=\langle\psi_x|C^\dag_n|y\rangle\langle
 y|C_n|\psi_x\rangle = \langle\psi_x|C^\dag_nX_y|0\rangle^{\otimes
 l}\langle 0|^{\otimes l}X_yC_n|\psi_x\rangle,$$
where $X_y$ is the tensor product of $X$ and $I$ such that
 $X_y|0\rangle^{\otimes l}=|y\rangle$. As described in~\cite{Ni}, it can
 be shown by induction on $l$ that
$$|0\rangle^{\otimes l}\langle 0|^{\otimes l} =
 \frac{1}{2^l}\sum_{S\subseteq \{1,\ldots,l\}}Z(S),$$
where $Z(S)$ is the tensor product of $Z$ and $I$ such that $Z$ is only
 on qubit $j\in S$. Thus,
$${\rm Pr}[C_n(x)=y]=\frac{1}{2^l}\sum_{S\subseteq \{1,\ldots,l\}}
\langle\psi_x|G^\dag_1\cdots G^\dag_NX_yZ(S)X_yG_N\cdots
 G_1|\psi_x\rangle.$$
We can represent $G^\dag_NX_yZ(S)X_yG_N$ as a tensor product of Pauli
 gates with some coefficient $\pm1$ since $G_N$ is a Clifford gate and
 $X_yZ(S)X_y$ is a tensor product of Pauli gates (in fact, $Z$ and $I$
 gates with some coefficient $\pm1$). We repeat this transformation $N$
 times and obtain
\begin{align*}
{\rm Pr}[C_n(x)=y] & = \frac{1}{2^l}\sum_{S\subseteq
 \{1,\ldots,l\}}\gamma^S \langle\psi_x|P_1^S\otimes \cdots \otimes
 P_{n+a}^S|\psi_x\rangle \\
& = \frac{1}{2^l}\sum_{S\subseteq \{1,\ldots,l\}}\gamma^S
 \langle x_1|P_1^S|x_1\rangle \cdots
 \langle x_n|P_n^S|x_n\rangle
 \langle\psi_1|P_{n+1}^S|\psi_1\rangle \cdots
 \langle\psi_a|P_{n+a}^S|\psi_a\rangle
\end{align*}
for some coefficient $\gamma^S$ and Pauli gates $P_j^S$. It is easy to
 construct a polynomial-time classical algorithm for obtaining
 $\gamma^S$ and $P_j^S$ for any $S\subseteq \{1,\ldots,l\}$. Moreover,
 since $l=O(\log n)$, it suffices to consider only polynomially many
 $S$. Thus, the above representation immediately implies a
 polynomial-time classical algorithm for computing Pr$[C_n(x)=y]$. The
 marginal output probability can also be computed similarly and thus
 $C_n$ is strongly simulatable.
\end{document}